\newdimen\proofrulebreadth \proofrulebreadth=.05em
\newdimen\proofdotseparation \proofdotseparation=1.25ex
\newdimen\proofrulebaseline \proofrulebaseline=2ex
\newdimen\proofrulebaseline \proofrulebaseline=1.7ex
\let\then\relax
\def\hfi{\hskip0pt plus.0001fil}
\mathchardef\squigto="3A3B
\newif\ifinsideprooftree\insideprooftreefalse
\newif\ifonleftofproofrule\onleftofproofrulefalse
\newif\ifproofdots\proofdotsfalse
\newif\ifdoubleproof\doubleprooffalse
\let\wereinproofbit\relax
\newdimen\shortenproofleft
\newdimen\shortenproofright
\newdimen\proofbelowshift
\newbox\proofabove
\newbox\proofbelow
\newbox\proofrulename
\def\shiftproofbelow{\let\next\relax\afterassignment\setshiftproofbelow\dimen0 }
\def\shiftproofbelowneg{\def\next{\multiply\dimen0 by-1 }%
\afterassignment\setshiftproofbelow\dimen0 }
\def\setshiftproofbelow{\next\proofbelowshift=\dimen0 }
\def\setproofrulebreadth{\proofrulebreadth}
\def\prooftree{
%
\ifnum  \lastpenalty=1
\then   \unpenalty
\else   \onleftofproofrulefalse
\fi
%
\ifonleftofproofrule
\else   \ifinsideprooftree
        \then   \hskip.5em plus1fil
        \fi
\fi
%
\bgroup
\setbox\proofbelow=\hbox{}\setbox\proofrulename=\hbox{}%
\let\justifies\proofover\let\leadsto\proofoverdots\let\Justifies\proofoverdbl
\let\using\proofusing\let\[\prooftree
\ifinsideprooftree\let\]\endprooftree\fi
\proofdotsfalse\doubleprooffalse
\let\thickness\setproofrulebreadth
\let\shiftright\shiftproofbelow \let\shift\shiftproofbelow
\let\shiftleft\shiftproofbelowneg
\let\ifwasinsideprooftree\ifinsideprooftree
\insideprooftreetrue
%
\setbox\proofabove=\hbox\bgroup$\displaystyle 
\let\wereinproofbit\prooftree
%
\shortenproofleft=0pt \shortenproofright=0pt \proofbelowshift=0pt
%
\onleftofproofruletrue\penalty1
}
\def\eproofbit{
%
\ifx    \wereinproofbit\prooftree
\then   \ifcase \lastpenalty
        \then   \shortenproofright=0pt  
        \or     \unpenalty\hfil         
        \or     \unpenalty\unskip       
        \else   \shortenproofright=0pt  
        \fi
\fi
%
\global\dimen0=\shortenproofleft
\global\dimen1=\shortenproofright
\global\dimen2=\proofrulebreadth
\global\dimen3=\proofbelowshift
\global\dimen4=\proofdotseparation
\global\count255=\proofdotnumber
%
$\egroup  
%
\shortenproofleft=\dimen0
\shortenproofright=\dimen1
\proofrulebreadth=\dimen2
\proofbelowshift=\dimen3
\proofdotseparation=\dimen4
\proofdotnumber=\count255
}
\def\proofover{
\eproofbit 
\setbox\proofbelow=\hbox\bgroup 
\let\wereinproofbit\proofover
$\displaystyle
}%
\def\proofoverdbl{
\eproofbit 
\doubleprooftrue
\setbox\proofbelow=\hbox\bgroup 
\let\wereinproofbit\proofoverdbl
$\displaystyle
}%
\def\proofoverdots{
\eproofbit 
\proofdotstrue
\setbox\proofbelow=\hbox\bgroup 
\let\wereinproofbit\proofoverdots
$\displaystyle
}%
\def\proofusing{
\eproofbit 
\setbox\proofrulename=\hbox\bgroup 
\let\wereinproofbit\proofusing
\kern0.3em$
}
\def\endprooftree{
\eproofbit 
  \dimen5 =0pt
%
\dimen0=\wd\proofabove \advance\dimen0-\shortenproofleft
\advance\dimen0-\shortenproofright
%
\dimen1=.5\dimen0 \advance\dimen1-.5\wd\proofbelow
\dimen4=\dimen1
\advance\dimen1\proofbelowshift \advance\dimen4-\proofbelowshift
%
\ifdim  \dimen1<0pt
\then   \advance\shortenproofleft\dimen1
        \advance\dimen0-\dimen1
        \dimen1=0pt
        \ifdim  \shortenproofleft<0pt
        \then   \setbox\proofabove=\hbox{%
                        \kern-\shortenproofleft\unhbox\proofabove}%
                \shortenproofleft=0pt
        \fi
\fi
%
\ifdim  \dimen4<0pt
\then   \advance\shortenproofright\dimen4
        \advance\dimen0-\dimen4
        \dimen4=0pt
\fi
%
\ifdim  \shortenproofright<\wd\proofrulename
\then   \shortenproofright=\wd\proofrulename
\fi
%
\dimen2=\shortenproofleft \advance\dimen2 by\dimen1
\dimen3=\shortenproofright\advance\dimen3 by\dimen4
%
\ifproofdots
\then
        \dimen6=\shortenproofleft \advance\dimen6 .5\dimen0
        \setbox1=\vbox to\proofdotseparation{\vss\hbox{$\cdot$}\vss}%
        \setbox0=\hbox{%
                \advance\dimen6-.5\wd1
                \kern\dimen6
                $\vcenter to\proofdotnumber\proofdotseparation
                        {\leaders\box1\vfill}$%
                \unhbox\proofrulename}%
\else   \dimen6=\fontdimen22\the\textfont2 
        \dimen7=\dimen6
        \advance\dimen6by.5\proofrulebreadth
        \advance\dimen7by-.5\proofrulebreadth
        \setbox0=\hbox{%
                \kern\shortenproofleft
                \ifdoubleproof
                \then   \hbox to\dimen0{%
                        $\mathsurround0pt\mathord=\mkern-6mu%
                        \cleaders\hbox{$\mkern-2mu=\mkern-2mu$}\hfill
                        \mkern-6mu\mathord=$}%
                \else   \vrule height\dimen6 depth-\dimen7 width\dimen0
                \fi
                \unhbox\proofrulename}%
        \ht0=\dimen6 \dp0=-\dimen7
\fi
%
\let\doll\relax
\ifwasinsideprooftree
\then   \let\VBOX\vbox
\else   \ifmmode\else$\let\doll=$\fi
        \let\VBOX\vcenter
\fi
\VBOX   {\baselineskip\proofrulebaseline \lineskip.2ex
        \expandafter\lineskiplimit\ifproofdots0ex\else-0.6ex\fi
        \hbox   spread\dimen5   {\hfi\unhbox\proofabove\hfi}%
        \hbox{\box0}%
        \hbox   {\kern\dimen2 \box\proofbelow}}\doll%
%
\global\dimen2=\dimen2
\global\dimen3=\dimen3
\egroup 
\ifonleftofproofrule
\then   \shortenproofleft=\dimen2
\fi
\shortenproofright=\dimen3
%
\onleftofproofrulefalse
\ifinsideprooftree
\then   \hskip.5em plus 1fil \penalty2
\fi
}
\newcommand\nw[1]{#1^{\scalebox{.4}{\NEW}}}
\newcommand\lam[1]{\lambda #1.}
\def\apart{{\#}}
\newcommand\app{\mathtt{app}}
\newcommand\vr{\mathtt{var}}
\newcommand\fn{\mathtt{lam}}
\newcommand\qleq{\mathrel{{}_{\scriptstyle {?}}{\approx}}}
\newcolumntype{L}{>{$}l<{$}}
\newcolumntype{C}{>{$}c<{$}}
\newcolumntype{R}{>{$}r<{$}}
\newcounter{edefinitioncount}
\par\addtolength{\baselineskip}{.8mm}\begin{tabbing}\hspace{5mm}\=\hspace{5mm}\=\hspace{5mm}\=\kill}
\newtheoremstyle{jamiestyle}
 {4pt}
 {0pt}
 {\it}
 {0pt}
 {\bf}
 {.}
 { }
 {}
\theoremstyle{jamiestyle}
\newtheorem{thrm}{Theorem}[section]
\newtheorem{prop}[thrm]{Proposition}
\newtheorem{lemm}[thrm]{Lemma}
\newtheorem{corr}[thrm]{Corollary}
\newtheoremstyle{jamienfstyle}
  {4pt}
  {0pt}
  {\normalfont}
  {0pt}
  {\bf}
  {.}
  { }
  {}
\theoremstyle{jamienfstyle}
\newtheorem{defn}[thrm]{Definition}
\newtheorem{xmpl}[thrm]{Example}
\newtheorem{rmrk}[thrm]{Remark}
\newcommand\id{\f{id}}
\newcommand{\rewriteswith}[1]{\stackrel{#1}{\to}}
\newcommand{\rewriteswithc}[1]{\stackrel{#1}{\to_c}}
\newcommand{\tworewriteswith}[1]{\stackrel{#1}{\leftrightarrow}}
\newcommand{\tworewriteswithc}[1]{\stackrel{#1}{\leftrightarrow_c}}
\newcommand\aleq{\mathrel{{\approx}_{\scriptstyle {\alpha}}}}
\newcommand\new{\reflectbox{$\mathsf N$}}
\newcommand{\f}[1]{\ensuremath{\mathit{#1}}}
\newcommand\act{\cdot}
\newcommand{\liff}{\ensuremath{\Leftrightarrow}}
\newlength{\insidewd}
\newlength{\stackht}
\newcommand{\deffont}[1]{\textbf{#1}}
\newcommand{\rulefont}[1]{\ensuremath{(\mathbf{#1})}}
\newcommand{\Id}{\mathit{id}}
\newcommand{\cent}{\vdash}
\newcommand{\centt}[1]{\cent_{_{\theory{#1}}}}
\newcommand{\ssm}{\mapsto}
\newcommand{\sm}{\mapsto}
\newcommand{\theory}[1]{\ensuremath{\mathsf{#1}}}
\newcommand{\tf}[1]{\mathsf{#1}}
\newcommand\NEW[0]{\reflectbox{\ensuremath{\mathsf{N}}}}
\def\mone{{\text{-}1}}
\title{Closed nominal rewriting and efficiently computable nominal
  algebra equality}
\author{\href{http://www.dcs.kcl.ac.uk/staff/maribel}{Maribel Fern\'andez} \institute{\href{http://www.dcs.kcl.ac.uk/staff/maribel}{www.dcs.kcl.ac.uk/staff/maribel}} \and \href{http://www.gabbay.org.uk}{Murdoch J. Gabbay} \institute{\href{http://www.gabbay.org.uk}{www.gabbay.org.uk}}}
\begin{document}
\maketitle

\begin{abstract}
  We analyse the relationship between nominal algebra and nominal
  rewriting, giving a new and concise presentation of equational
  deduction in nominal theories.  With some new results, we
  characterise a subclass of equational theories for which nominal
  rewriting provides a complete  procedure to check 
  nominal algebra equality.  This subclass includes specifications of
  lambda-calculus and first-order logic. 
\end{abstract}

\section{Introduction}
\label{sec.introduction}

It is very common, when formally defining a programming language,
computation model, or deduction system, to appeal to operators with
binding like $\forall$, $\lambda$, $\nu$, or $\int$.  We are therefore
interested in frameworks with support for the specification, analysis
and evaluation of operators with binding mechanisms. Such frameworks
are needed not only in logic and theoretical computer science (where
binders like $\forall$, $\lambda$ and $\nu$ are familiar) but also to
mechanise mathematics, linguistics, systems biology, and so on.

First, we need to define the notion of a binder.  One answer is to
identify all binding with functional abstraction $\lambda$.  This
approach is taken in the definition of higher-order abstract
syntax~\cite{pfenning:hoas,pfenning:prirh}, higher-order
rewriting~\cite{NipkowT:higors}, and combinatory reduction
systems~\cite{KlopJW:comrsis}, amongst others.  Since higher-order
unification is undecidable, and it plays a key role in rewriting
(e.g., rewrite steps are computed using matching, critical pairs are
computed using unification),
most higher-order rewrite formalisms only use \emph{higher-order
  patterns}~\cite{MillerD:unistl}, a decidable sublanguage.  This
fact already suggests that names and binding might be a simpler
concept, and computationally more tractable, than raw functional
abstraction.

In fact, it has been shown that higher-order patterns correspond
almost exactly to \emph{nominal
  terms}~\cite{LevyJ:rta08,gabbay:perntu-jv}.  This correspondence is
robust, and extends to solutions of unification problems
\cite{gabbay:perntu-jv}, and also in the presence of arbitrary
equality theories \cite{gabbay:unialt}.  Unification and matching of
nominal terms are decidable~\cite{gabbay:nomu-jv} and efficient
(see~\cite{FernandezM:wollic-jv} for a linear-time nominal matching
algorithm, and~\cite{CalvesC:phd,LevyJ:effnua} for efficient unification
algorithms).  Nominal terms have been the basis of rewriting
\cite{gabbay:nomr-jv}, logic programming \cite{cheney:nomlp}, and
algebra \cite{gabbay:nomuae}.

Nominal terms are like first-order terms (`standard' syntax) but come
supplied with \emph{atoms}; a kind of bindable constant with semantics
discussed in \cite{gabbay:newaas-jv}.  Atoms display special behaviour
which will be developed in the body of the paper.  For now, we
illustrate the use of nominal terms to express a theory of
$\beta\eta$-equivalence in nominal algebra.

Suppose term-formers $\tf{lam}:1$ (the number indicates arity) and $\tf{app}:2$.
Then:
$$
\begin{array}{l@{\ }l@{\ }l}
\tf{lam}([a]\tf{lam}([b]\tf{app}(a,b))) & \text{represents the $\lambda$-term} & \lam{f}\lam{x}fx
\\
\tf{lam}([a]\tf{lam}([b]X)) & \text{represents a $\lambda$-term schema} &\lam{x}\lam{y}t
\end{array}
$$
Here $a$ and $b$ are atoms, $[a]\text{-}$ is \emph{atoms-abstraction}, with special properties we discuss later, and 
$X$ is a variable which corresponds to \emph{meta-variables} like $t$ above.
But $X$ is not a meta-variable; it is a variable in nominal terms.
To avoid confusion, we call variables in nominal terms \emph{unknowns}.

We can define $\beta$-reduction and $\eta$-reduction as follows (see~\cite{gabbay:nomr} for an alternative nominal rewriting system that uses an explicit substitution operator):
\[\begin{array}{c@{\quad}llcl}
\rulefont{\beta_{\app}} & &\tf{app}(\tf{lam}([a]\tf{app}(X,X')),Y)& \to  & 
\\
&&&&\hspace{-5em}\tf{app}(\tf{app}(\tf{lam}([a]X',Y),\ \tf{app}(\tf{lam}([a]X),Y)))
\\
\rulefont{\beta_{\vr}} & &\tf{app}(\tf{lam}([a]a),X)&\to& X
\\
\rulefont{\beta_\epsilon} & a\apart Y\cent &\tf{app}(\tf{lam}([a]Y),X)& \to & Y
\\
\rulefont{\beta_\fn} & b\apart Y \cent &\tf{app}(\tf{lam}([a]\tf{lam}([b]X)),Y)&  \to  & \tf{lam}([b]\tf{app}(\tf{lam}([a]X),Y))
\\
\rulefont{\eta} & a\# X \cent & \tf{lam}([a]\tf{app}(X,a)) &\to& X
\end{array}
\]
We obtain a nominal algebra theory just by replacing $\to$ with $=$.

Setting aside the verbosity of the syntax above, what we would like the reader to take from this example is \emph{how close} the specification is to what we write in mathematical discourse. A \emph{freshness side-condition} $a\#X$ formalises the
usual condition  $x\not\in\f{fv}(u)$, using an atom $a$ for $x$ and 
an unknown $X$ for the metavariable $u$.

This motivates nominal algebra~\cite{gabbay:noma-nwpt,gabbay:nomuae}
and also nominal rewriting~\cite{gabbay:nomr,gabbay:nomr-jv}; theories of
equality and rewriting respectively for nominal terms (see also~\cite{CloustonR:nomel}, though this does not use nominal terms). 
The resulting theories have semantics in (nominal) sets and good computational properties; these are investigated in several other papers by the authors and others.
 
The relationship between equational reasoning and rewriting is well understood
in the first-order case where terms do not include binders:  
If an equational theory $E$ can be presented by a terminating
and confluent rewrite system then equality modulo $E$ is
decidable~\cite{DershowitzN:rews,NipkowT:terraa}. Even if the 
rewrite system is not confluent it may still be possible to use rewriting if the
system can be completed by adding new rules~\cite{KnuthD:completion};  
implementations of
equational logic have been based on these observations~\cite{BoyerR:ACL2,ODonnelM:eqlp,GoguenJ:2OBJ,McCuneW:EQP,McCuneW:Otter}.

However, in systems with binding the situation is
different. Semi-automatic tools exist, many relying on higher-order
formalisms that use the $\lambda$-calculus as meta-language, but
since higher-order unification is undecidable in general, higher-order rewriting
frameworks need to restrict the form of the rules to achieve a
decidable rewriting relation. This makes it difficult to define
completion procedures for higher-order rewriting systems.
For nominal systems, the relationship between rewriting and equality
is not straightforward and has not been established yet. This paper
fills this gap. 

The main contributions of this paper are: 
\begin{itemize*}
\item We give new presentations of nominal rewriting and nominal
  algebra that are significantly more concise than those
  in~\cite{gabbay:nomr-jv,gabbay:nomuae}. This gives a clear and
  `user-friendly' overview of the two systems.

\item 
  We identify a completeness result
  (Theorem~\ref{thrm.compl}) which shows  a precise 
  connection between nominal rewriting and nominal algebra.
In other words, we fill the gap mentioned above.

\item We identify a stronger completeness result for a subset of
  nominal rewriting already investigated for its good computational
  properties \cite{gabbay:nomr-jv}: \emph{closed} rewriting.  Closed
  rewriting is sound and complete for nominal algebra
  (Theorem~\ref{thm:sound-compl}), in a particularly direct manner.

\end{itemize*} 
Note that the collection of closed nominal terms is at least as
expressive as other systems in the literature, including Combinatory
Reduction Systems~\cite{KlopJW:comrsis} and Higher-Order Rewriting
Systems~\cite{NipkowT:higors}.  This is discussed in
\cite{gabbay:nomr-jv}.  However, nominal rewrite/algebra systems exist
that do \emph{not} fall into the closed collection.  For instance, the
natural specification of $\pi$-calculus \rulefont{Open} labelled
transition \cite{Milner:calmpII} displays a \emph{gensym}-like
behaviour that, as it happens, is not captured by closed nominal terms
(but can be defined using nominal rewriting rules): $P
\stackrel{a\overline{b}}{\to} Q$ implies $\nu b.P
\stackrel{a\overline{b}}{\to} Q$.\footnote{This rule can be fit
  into the nominal algebra/rewriting framework, e.g. with a bit of
  sugar as follows: $(Z,\ P \stackrel{a\overline{b}}{\to} Q)\to (Z,\ P
  \stackrel{a\overline{b}}{\to} Q,\ \nu b.P
  \stackrel{a\overline{b}}{\to} Q)$.  We are interested in
  expressivity, not elegance, at this point.}

So both our completeness results are relevant.
We cannot say one is `right' and the other `wrong'; nominal terms are more expressive but fewer things are true of them relative to closed nominal terms.  Both have good theorems relating rewriting with equational reasoning, which we describe in this paper. 
 
The rest of the paper is organised as follows: In
Section~\ref{sec.syntax} we recall the basic notions of nominal
syntax.  Section~\ref{ssec.na.theories} gives a new and uniform
presentation of nominal algebra and nominal
rewriting. Section~\ref{sec:compl} compares nominal algebra and
rewriting and establishes a first completeness
result. Section~\ref{sec:closedrewriting} discusses closed nominal
rewriting as an efficient mechanism to implement deduction in nominal
theories, and establishes the soundess and completeness of nominal
rewriting for equational deduction in theories presented by closed
rules. Using this result, we give an algorithm to implement nominal
algebra in an efficient way. We conclude the paper in
Section~\ref{sec:Conclusions}.

\section{Syntax and $\alpha$-equivalence}
\label{sec.syntax}

\emph{Nominal terms} were introduced in \cite{gabbay:nomu-jv}
as a formal syntax for the specification of systems with binding.
 In this section we recall the main notions of nominal syntax; for
 more details and examples we refer the reader
to~\cite{gabbay:nomu-jv, gabbay:nomr-jv}.

\subsection{Terms and signatures}
\label{ssec.na.sig}

\begin{defn}
Fix disjoint countably infinite collections of \deffont{atoms},
\deffont{unknowns} (or variables), and \deffont{term-formers} (or
function symbols).  We write $\mathbb A$ for the set of atoms;
$a,b,c,\ldots$ will range over distinct atoms.  $X,Y,Z,\ldots$ will
range over distinct unknowns.  $\tf f,\tf g,\ldots$ will range over
distinct term-formers.  We assume that to each $\tf{f}$ is associated
an \deffont{arity} $n$ which is a nonnegative number; we write
${\tf{f}: n}$ to indicate that $\tf{f}$ has arity $n$.  A
\deffont{signature} $\Sigma$ is a set of term-formers with their
arities.
\end{defn}

\begin{defn}
\label{def.permutations}
A \deffont{permutation} $\pi$ is a bijection on atoms 
such that $\f{nontriv}(\pi)=\{a\mid \pi(a) \neq a\}$ is finite.

We write $(a\ b)$ for the \deffont{swapping} permutation that maps $a$
to $b$, $b$ to $a$ and all other $c$ to themselves, and $\Id$ for the
\deffont{identity permutation}, so ${\Id(a)=a}$.  The notation 
${\pi\circ\pi'}$ is used for \deffont{functional composition} of permutations,
so $(\pi\circ\pi')(a)=\pi(\pi'(a))$, and ${\pi^\mone}$ for
\deffont{inverse}, so $\pi(a)=b$ if and only if $a=\pi^\mone(b)$.
\end{defn}

\begin{defn}
\label{defn.terms}
\deffont{(Nominal) terms} are inductively defined by:
\begin{gather*}
s,t,l,r,u\ \ ::=\ \ a\ \mid\ \pi\cdot X\ \mid\ [a]t\ \mid\ \tf{f}(t_1,\ldots,t_n) 
\end{gather*}
We write $\equiv$ for syntactic identity, so $t\equiv u$ when $t$ and $u$ denote the same term.
\end{defn}
A term of the form $[a]t$ is called an \deffont{(atom-)abstraction}; it represents `$x.e$' or `$x.\phi$' 
in expressions like `${\lambda x.e}$' or `${\forall x.\phi}$'. 
We define an $\alpha$-equivalence relation $\aleq$ later, in Definition~\ref{def.aleq}.

\subsection{Permutation and substitution}

\begin{defn}
\label{def.calculus.actions.permutation.object}
An \deffont{(atoms) permutation action} ${\pi \cdot t}$ is defined by: 
$$
\begin{array}{r@{\ }l@{\qquad}r@{\ }l}
\pi \cdot a                       \equiv & \pi(a)
&
\pi \cdot (\pi' {\cdot} X)        \equiv & (\pi \circ \pi') \cdot X
\\
\pi \cdot [a]t                   \equiv & [\pi(a)] (\pi \cdot t)
&
\pi \cdot \tf{f}(t_1, \ldots,t_n) \equiv & \tf{f}(\pi \cdot t_1, \ldots,\pi \cdot t_n)
\end{array}
$$

A \deffont{substitution (on unknowns)} $\sigma$ is a partial function from unknowns to terms with finite domain. 
$\theta$ and $\sigma$ will range over substitutions.

\label{def.calculus.actions.substitution}
An \deffont{(unknowns) substitution action} $t\sigma$ is defined by:
$$
\begin{array}{r@{\ }l@{\qquad}r@{\ }l@{\quad}l}
a\sigma                   \equiv& a
&
(\pi\act X)\sigma \equiv & \pi\act X & (X\not\in\f{dom}(\sigma))
\\
([a]t)\sigma              \equiv& [a](t\sigma)
&
(\pi \cdot X)\sigma       \equiv& \pi \cdot \sigma(X) & (X\in\f{dom}(\sigma))
\\
\tf{f}(t_1, \ldots, t_n)\sigma \equiv& \tf{f}(t_1\sigma, \ldots,t_n\sigma)
\end{array}
$$

Henceforth, if $X\not\in\f{dom}(\sigma)$ then $\sigma(X)$ denotes $\id\act X$.

We write $\id$ for the substitution with $\f{dom}(\id)=\varnothing$, so
that $t\id\equiv t$. When we write $\id$, it will be
clear whether we mean `$\id$ the identity substitution' or `$\id$ the
identity permutation' (Definition~\ref{def.permutations}).

If $\sigma$ and $\theta$ are substitutions, $\sigma\circ\theta$ maps each $X$ to $(X\sigma)\theta$.
\end{defn}

Lemmas~\ref{lem.equiv.perm}, \ref{lem.comm} and \ref{lemm.compose.subs} are proved by routine inductions (see~\cite{ gabbay:nomr-jv}).
\begin{lemm}
\label{lem.equiv.perm}
$(\pi \circ \pi') \cdot t \equiv \pi \cdot (\pi' \cdot t)$ \enspace and \enspace $\Id \cdot t \equiv t$.
\end{lemm}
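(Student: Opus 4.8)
The plan is to prove both equations by a single structural induction on $t$, following the grammar of Definition~\ref{defn.terms}. The two claims are independent, but sharing the induction keeps the argument short; one could instead derive $\Id\cdot t\equiv t$ from the composition statement, but since the composition statement is not needed for it, I will simply run the two inductions in parallel. Throughout, recall that $\equiv$ is syntactic identity, so in each case there is nothing to check beyond unfolding the definition of the permutation action on both sides and matching the results symbol for symbol.

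For the base case $t\equiv a$, the first clause of the permutation action gives $(\pi\circ\pi')\cdot a\equiv(\pi\circ\pi')(a)$ and $\pi\cdot(\pi'\cdot a)\equiv\pi(\pi'(a))$, and these agree by the very definition of functional composition of permutations in Definition~\ref{def.permutations}; similarly $\Id\cdot a\equiv\Id(a)\equiv a$. For the base case $t\equiv\pi''\cdot X$, the second clause gives $(\pi\circ\pi')\cdot(\pi''\cdot X)\equiv((\pi\circ\pi')\circ\pi'')\cdot X$ and $\pi\cdot(\pi'\cdot(\pi''\cdot X))\equiv\pi\cdot((\pi'\circ\pi'')\cdot X)\equiv(\pi\circ(\pi'\circ\pi''))\cdot X$, so the claim reduces to associativity of $\circ$, which holds because $\circ$ is functional composition; likewise $\Id\cdot(\pi''\cdot X)\equiv(\Id\circ\pi'')\cdot X\equiv\pi''\cdot X$.

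For the inductive cases $t\equiv[a]s$ and $t\equiv\tf{f}(t_1,\dots,t_n)$, I would apply the relevant clause of the permutation action to push the action inward, use the already-established atom-level identities on the bound atom $a$ in the abstraction case (namely $(\pi\circ\pi')(a)=\pi(\pi'(a))$ and $\Id(a)=a$), and then invoke the induction hypothesis on the immediate subterm(s). These steps are purely mechanical. I do not expect any genuine obstacle: the only points needing care are keeping the composition convention straight — in particular that $(\pi\circ\pi')(a)=\pi(\pi'(a))$ and that $\circ$ is associative — and being explicit that, because $\equiv$ is syntactic identity, the proof is complete once the two unfoldings coincide. This matches the excerpt's description of the statement as proved by a routine induction.
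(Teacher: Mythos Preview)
Your proposal is correct and matches the paper's approach: the paper simply states that the lemma is proved by a routine induction (citing \cite{gabbay:nomr-jv}) without spelling out the cases, and your structural induction on $t$ is exactly that routine argument made explicit. There is nothing to add.
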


\begin{lemm}
\label{lem.comm}
$\pi \act (t\sigma) \equiv (\pi \act t)\sigma$.
\end{lemm}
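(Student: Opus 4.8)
The plan is to proceed by induction on the structure of the term $t$, following the grammar of Definition~\ref{defn.terms}. In each case one unfolds the definition of the permutation action (Definition~\ref{def.calculus.actions.permutation.object}) and of the substitution action (Definition~\ref{def.calculus.actions.substitution}) on both sides of the claimed identity and checks that the resulting expressions are syntactically equal; note that the statement asserts $\equiv$, so there is no $\alpha$-renaming to manage anywhere.

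For $t\equiv a$ both sides reduce to $\pi(a)$, using that atoms are untouched by $\sigma$ and that $\pi(a)$ is again an atom, hence also untouched by $\sigma$. For $t\equiv\pi'\act X$ I would split on whether $X\in\f{dom}(\sigma)$. If $X\not\in\f{dom}(\sigma)$ then both $\pi\act((\pi'\act X)\sigma)$ and $(\pi\act(\pi'\act X))\sigma$ reduce to $(\pi\circ\pi')\act X$. If $X\in\f{dom}(\sigma)$ then the left-hand side is $\pi\act(\pi'\act\sigma(X))$ while the right-hand side is $(\pi\circ\pi')\act\sigma(X)$, and these agree by Lemma~\ref{lem.equiv.perm}. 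This unknown case, though short, is the only point where the reasoning is not a pure definitional unfolding, so it is the step to handle with a little care.

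For $t\equiv[a]t'$ and for $t\equiv\tf f(t_1,\dots,t_n)$ the permutation and substitution actions each commute past the constructor — the abstraction carrying the same bound atom $\pi(a)$ on both sides — so the desired identity follows immediately from the induction hypothesis, applied to $t'$ in the first case and to each $t_i$ in the second.

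There is no real obstacle here: the result is a routine structural induction, exactly as the excerpt advertises. The only thing worth flagging is that in the abstraction case one relies on substitution on unknowns passing under $[a]\thehole$ without any renaming of the bound atom; this is legitimate precisely because we work up to syntactic identity rather than up to the $\alpha$-equivalence $\aleq$ (which has not yet been introduced), and the atom $a$ is transported to $\pi(a)$ identically on the two sides.
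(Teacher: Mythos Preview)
Your proposal is correct and follows exactly the approach the paper indicates: a routine structural induction on $t$, with Lemma~\ref{lem.equiv.perm} handling the moderated-unknown case. The paper gives no further detail beyond ``routine inductions'', so your case analysis is simply a faithful expansion of that.
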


\begin{lemm}
\label{lemm.compose.subs}
$ t(\sigma\circ\theta)\equiv(t\sigma)\theta$.
\end{lemm}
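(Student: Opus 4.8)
The plan is a routine structural induction on $t$, following the grammar of Definition~\ref{defn.terms}, unfolding the substitution action of Definition~\ref{def.calculus.actions.substitution} on both sides of $t(\sigma\circ\theta)\equiv(t\sigma)\theta$ in each case and reconciling the two normal forms.

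First, the atom case $t\equiv a$: both sides reduce immediately to $a$, since substitution fixes atoms. Next, the moderated-unknown case $t\equiv\pi\cdot X$: by definition of $\sigma\circ\theta$ and of the substitution action the left-hand side is $\pi\cdot\bigl((X\sigma)\theta\bigr)$, while the right-hand side is $\bigl((\pi\cdot X)\sigma\bigr)\theta\equiv(\pi\cdot\sigma(X))\theta$; now Lemma~\ref{lem.comm} pushes the permutation through $\theta$ to give $\pi\cdot\bigl(\sigma(X)\theta\bigr)$. It then remains to observe $X\sigma\equiv\sigma(X)$, which holds whether or not $X\in\f{dom}(\sigma)$: on the domain this is Lemma~\ref{lem.equiv.perm} (stripping an $\id$), and off the domain it is the stated convention that $\sigma(X)$ denotes $\id\cdot X$. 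This unknown case is the one spot where the argument is more than pure mechanics — the bookkeeping about $\f{dom}(\sigma)$ and the appeal to Lemma~\ref{lem.comm} is the only real content — but it is still short.

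Finally, the abstraction case $t\equiv[a]s$ and the term-former case $t\equiv\tf{f}(t_1,\ldots,t_n)$: in both, the substitution action commutes with the term constructor, so after unfolding the constructor on each side the result is immediate from the induction hypothesis applied to $s$, respectively to each $t_i$, followed by refolding the constructor.

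I do not expect a genuine obstacle: the whole proof is a direct recursion, and the only subtlety worth flagging in the write-up is the uniform handling of $\pi\cdot X$ via the $\sigma(X)=\id\cdot X$ convention together with Lemma~\ref{lem.comm}. (This matches the excerpt's remark that Lemmas~\ref{lem.equiv.perm}, \ref{lem.comm} and \ref{lemm.compose.subs} are all proved by routine inductions.)
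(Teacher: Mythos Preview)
Your proposal is correct and matches the paper's approach: the paper simply states that Lemmas~\ref{lem.equiv.perm}, \ref{lem.comm} and~\ref{lemm.compose.subs} are proved by routine inductions, and your structural induction on $t$ is exactly that routine argument spelled out. Your handling of the $\pi\cdot X$ case via Lemma~\ref{lem.comm} and the $\sigma(X)=\id\cdot X$ convention is the natural way to do it.
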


\begin{figure*}[t]
$$
\begin{array}{c@{\quad}c}
\begin{prooftree}
\phantom{a \# t}
\justifies \Delta\cent a\# b
\using \rulefont{\#ab}
\end{prooftree}
&
\begin{prooftree}
\phantom{a \# t}
\justifies \Delta\cent a\# [a]t
\using \rulefont{\#[a]}
\end{prooftree}
\\[4ex]
\begin{prooftree}
(\pi^\mone(a)\# X)\in\Delta \justifies \Delta\cent a\# \pi\cdot X
\using \rulefont{\#X}
\end{prooftree}
&
\begin{prooftree}
\Delta\cent a\# t 
\justifies 
\Delta\cent a\# [b]t
\using \rulefont{\#[b]}
\end{prooftree}
\\[4ex]
\begin{prooftree}
\Delta\cent a\# t_1\: \cdots \:  \Delta\cent a\# t_n 
\justifies 
\Delta\cent a\# \tf{f}(t_1, \ldots, t_n)
\using \rulefont{\#\tf{f}}
\end{prooftree}
&
\begin{prooftree}
\phantom{h}
\justifies
\Delta\cent a\aleq a
\using \rulefont{{\aleq}a}
\end{prooftree}
\\[4ex]
\begin{prooftree}
\Delta\cent b\#t \ \ \Delta\cent (b\ a)\act t \aleq u
\justifies \Delta\cent [a]t  \aleq  [b]u 
\using \rulefont{{\aleq}[b]}
\end{prooftree}
&
\begin{prooftree}
(a\#X\in\Delta\text{ for all $a$ s.t. $\pi(a)\neq \pi'(a)$})
\justifies
\Delta\cent \pi\act X \aleq \pi'\act X
\using \rulefont{{\aleq}X}
\end{prooftree}
\\[4ex]
\begin{prooftree}
\Delta\cent t  \aleq  u \justifies \Delta\cent [a]t  \aleq  [a]u 
\using \rulefont{{\aleq}[a]}
\end{prooftree}
&
\begin{prooftree}
\Delta\cent t_i  \aleq  u_i\quad (1\leq i\leq n) 
\justifies 
\Delta\cent \tf{f}(t_1,\ldots,t_n)  \aleq  \tf{f}(u_1,\ldots,u_n)
\using \rulefont{{\aleq}\tf f}
\end{prooftree}
\end{array}
$$
\caption{Freshness and $\alpha$-equality}
\label{fig.rules.equality}
\end{figure*}

\subsection{$\alpha$-equivalence}

The native notion of equality on nominal terms is $\alpha$-equivalence.
For comparison, that of first-order terms is syntactic identity, and that of higher-order terms is $\beta$- or possibly $\beta\eta$-equivalence.

\begin{defn}
\label{def.aleq}
\label{def.calculus.freshness}
A \deffont{freshness (constraint)} is a pair ${a\#t}$ of an atom $a$ and a term $t$.  
We call a freshness of the form ${a\#X}$ \deffont{primitive}, and  a finite set of primitive freshnesses a \deffont{freshness context}. 
$\Delta$, $\Gamma$ and $\nabla$ will range over freshness contexts.

We may drop set brackets and write $a\#t,b\#u$ for $\{a\#t,b\#u\}$.
Also, we may write $a\#t,u$ for $a\#t,a\#u$, and $a,b \# t$ for $a\# t, b \#t$.

A \deffont{freshness judgement} is a tuple $\Delta\cent a\#t$ of a freshness context and a freshness constraint.
An \deffont{$\alpha$-equivalence judgement} is a tuple $\Delta\cent s\aleq t$ of a freshness context and two terms.
The \deffont{derivable} freshness and $\alpha$-equivalence judgements are defined by the rules in Figure~\ref{fig.rules.equality}.
\end{defn}

\begin{defn}
The functions $\f{atms}(t)$ and $\f{unkn}(t)$ will be used to compute the set of atoms and unknowns in a term, respectively. They are defined by:
$$
\begin{array}{r@{\ =\ }l@{\qquad}r@{\ =\ }l}
\f{atms}(a)&\{a\}
&
\f{atms}(\pi\act X)&\f{nontriv}(\pi)
\\
\f{atms}([a]t)&\f{atms}(t)\cup\{a\}
&
\f{atms}(\tf f(t_1,\ldots,t_n))&\bigcup_i \f{atms}(t_i)
\\[1.5ex]
\f{unkn}(a)&\varnothing
&
\f{unkn}(\pi\act X)&\{X\}
\\
\f{unkn}([a]t)&\f{unkn}(t)
&
\f{unkn}(\tf f(t_1,\ldots,t_n))&\bigcup_i \f{unkn}(t_i)
\end{array}
$$
\end{defn}

\begin{defn}
Later in this paper, starting with Definition~\ref{defn.closed}, we find it useful to write $\f{atms}(\mathtt{X})$ and $\f{unkn}(\mathtt{X})$ for $\mathtt{X}$ something more complex than a term --- e.g. a list (as in `$\f{atms}(\Delta,s,t)$'), a term-in-context (as in `$\f{unkn}(\nabla\cent l)$'), or a substitution.
By this we mean the atoms or unknowns appearing anywhere within the brackets.
So $\f{atms}(\Delta,s,t)$ means $\{a\mid a\#X\in\Delta\text{ for some }X\}\cup\f{atms}(s)\cup\f{atms}(t)$.
Also, 
$
\f{atms}(\theta)=\bigcup\{\f{atms}(\theta(X))\mid X\in\f{dom}(\theta)\}.
$
\end{defn}

\begin{lemm}[Strengthening]
\label{lemm.strengthening}
Suppose $a \not\in \f{atms}(s,t)$. 
Then:
\begin{itemize*}
\item
$\Delta, a \#X \cent b \# s$ implies $\Delta\cent b\#s$. 
\item
$\Delta, a \#X\cent s \aleq t$ implies $\Delta\cent s\aleq t$.
\end{itemize*}
\end{lemm}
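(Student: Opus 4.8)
The plan is to prove both items simultaneously by a single induction on the structure of derivations in Figure~\ref{fig.rules.equality}, since the two judgement forms are defined by mutually referring rules (the $\rulefont{{\aleq}[b]}$ rule has a freshness premise, and several $\alpha$-equality rules recurse only on $\alpha$-equality). The hypothesis $a\notin\f{atms}(s,t)$ must be threaded carefully through the induction: at each step I need to check that the subterms occurring in the premises still omit $a$ from their atoms, so that the induction hypothesis applies. For most rules this is immediate because $\f{atms}$ is defined compositionally, so $\f{atms}$ of a subterm is contained in $\f{atms}$ of the whole term; the one place that needs a moment's thought is $\rulefont{{\aleq}[b]}$, where the premise mentions $(b\ a)\act t$, but here $a=b$ is impossible (as $a\notin\f{atms}([a']t\,)$ forces $a\neq b$ in the relevant instance) and $\f{atms}((b\ a)\act t)=\f{atms}(t)$ when neither $a$ nor $b\dots$ — actually $\f{atms}((c\ a)\act t)$ can at worst add $a$, so I should instead use the previously available machinery, or simply note that applying a swapping that does not involve $a$ leaves $a$ out of the atoms. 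I will state this as a small sub-observation: if $a\notin\f{atms}(t)$ and $a\notin\{b,c\}$ then $a\notin\f{atms}((b\ c)\act t)$.

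The base cases are the axiom rules $\rulefont{\#ab}$, $\rulefont{\#[a]}$, $\rulefont{{\aleq}a}$, $\rulefont{{\aleq}X}$ and $\rulefont{\#X}$. For the genuinely axiomatic ones there is nothing to do — the same rule derives the conclusion from the smaller context. The interesting base case is $\rulefont{\#X}$ and the side-condition of $\rulefont{{\aleq}X}$: here the derivation of $\Delta, a\#X\cent b\#\pi\act Y$ uses that $(\pi^\mone(b)\#Y)\in\Delta, a\#X$. I must rule out that this membership is witnessed by the new hypothesis $a\#X$ itself, i.e. that $\pi^\mone(b)=a$ and $Y=X$. But $Y=X$ together with the term being $\pi\act X$ and $\pi^\mone(b)=a$ would give $b=\pi(a)$, and if $\pi(a)\neq a$ then $a\in\f{nontriv}(\pi)=\f{atms}(\pi\act X)$, contradicting $a\notin\f{atms}(s,t)$; and if $\pi(a)=a$ then $b=a$, but then $a\#s$ would be $a\#\pi\act X$ with $a\notin\f{atms}(\pi\act X)$, and by $\rulefont{\#X}$ this needs $a=\pi^\mone(a)\#X$ in the original $\Delta$ anyway — so either the membership already holds in $\Delta$, or we have a contradiction with the atoms hypothesis. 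A symmetric argument handles the side-condition of $\rulefont{{\aleq}X}$: any $c$ with $\pi(c)\neq\pi'(c)$ lies in $\f{atms}(\pi\act X)\cup\f{atms}(\pi'\act X)\subseteq\f{atms}(s,t)$, hence $c\neq a$, so the required $c\#X$ is in $\Delta$, not supplied by the new hypothesis.

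The inductive steps ($\rulefont{\#[b]}$, $\rulefont{\#\tf f}$, $\rulefont{{\aleq}[a]}$, $\rulefont{{\aleq}[b]}$, $\rulefont{{\aleq}\tf f}$) are then routine: in each case the premises concern subterms $t_i$ (or $[b]$-bodies, or the swapped term in $\rulefont{{\aleq}[b]}$) whose atoms are contained in $\f{atms}(s,t)$ — using the swapping sub-observation above for $\rulefont{{\aleq}[b]}$ — so the induction hypothesis strips $a\#X$ from each premise's context, and reapplying the same rule gives the conclusion over $\Delta$. I expect the main obstacle to be precisely the $\rulefont{\#X}$/$\rulefont{{\aleq}X}$ bookkeeping described above: it is the only place where the new freshness $a\#X$ could in principle be \emph{used}, and the argument that it is never actually needed is exactly where the hypothesis $a\notin\f{atms}(s,t)$ earns its keep. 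Everything else is a compositional pass through the rule set.
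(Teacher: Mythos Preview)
Your approach --- simultaneous induction on the derivation rules of Figure~\ref{fig.rules.equality}, checking at each rule that the atoms needed in the premises already occur in the conclusion --- is exactly the paper's; its proof is the one-line observation that ``in all cases the hypotheses of rules use only atoms already mentioned in the conclusions'', and your plan is a fleshed-out version of that.

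One wobble to fix: in the $\rulefont{\#X}$ base case, your treatment of the sub-case $\pi(a)=a$ (whence $b=a$) is circular. You write ``by $\rulefont{\#X}$ this needs $a\#X$ in the original $\Delta$ anyway'', but that is exactly the membership you are trying to establish for the strengthened conclusion, not something you already have; nothing in the hypotheses forces $a\#X\in\Delta$. The clean resolution is to invoke the paper's standing convention that distinct atom letters $a,b,c,\ldots$ range over \emph{distinct} atoms, so $b\neq a$ throughout the first bullet and this sub-case is vacuous. Then only the $\pi(a)\neq a$ branch remains, and your contradiction via $a\in\f{nontriv}(\pi)=\f{atms}(\pi\act X)$ goes through. (Your handling of $\rulefont{{\aleq}X}$ and of the swapped premise in $\rulefont{{\aleq}[b]}$ is fine; note that $a\neq d$ there comes from $a\notin\f{atms}([d]u)$ on the right-hand side, not from the left.)
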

\begin{proof}
By induction on the rules in Figure~\ref{fig.rules.equality}, using the fact that in all cases the hypotheses of rules use only atoms already mentioned in the conclusions.
\end{proof}

\begin{defn}
\label{defn.Stheta}
Suppose $S$ is a set of freshness constraints and $\theta$ is a substitution.
Define $S\theta=\{a\#(s\theta)\mid a\#s\in S\}$.

\end{defn}

\begin{lemm}[Weakening]
\label{lemm.weakening}
Suppose $\Delta\cent \Delta'\sigma$.
Then
\begin{itemize*}
\item
$\Delta'\cent b\#s$ implies $\Delta\cent b\#s\sigma$.
\item
$\Delta'\cent s\aleq t$ implies $\Delta\cent s\sigma\aleq t\sigma$.
\end{itemize*}
In particular, taking $\sigma=\id$ and $\Delta'=\Delta,\Gamma$, we obtain:
\begin{itemize*}
\item
$\Delta\cent b\#s$ implies $\Delta,\Gamma\cent b\#s$.
\item
$\Delta\cent s\aleq t$ implies $\Delta,\Gamma\cent s\aleq t$.
\end{itemize*}
\end{lemm}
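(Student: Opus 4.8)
The plan is to prove the two implications by induction on the derivations of $\Delta'\cent b\#s$ and of $\Delta'\cent s\aleq t$ in Figure~\ref{fig.rules.equality} --- first the freshness statement (its rules have only freshness premises, so it is self-contained), then the $\aleq$ statement, which will invoke the freshness statement because rule $\rulefont{{\aleq}[b]}$ has a freshness premise. Unpacking the hypothesis: by Definition~\ref{defn.Stheta}, $\Delta'\sigma=\{a\#(X\sigma)\mid a\#X\in\Delta'\}$, and ``$\Delta\cent\Delta'\sigma$'' means $\Delta\cent a\#(X\sigma)$ for every $a\#X\in\Delta'$. Throughout I use the recursive clauses defining $t\sigma$, the identity $(\pi\cdot t)\sigma\equiv\pi\cdot(t\sigma)$ (Lemma~\ref{lem.comm}), and \emph{equivariance} of the derivable judgements --- $\Delta\cent a\#t$ implies $\Delta\cent\pi(a)\#\pi\cdot t$, and $\Delta\cent s\aleq t$ implies $\Delta\cent\pi\cdot s\aleq\pi\cdot t$ --- which is a routine induction on Figure~\ref{fig.rules.equality} (as in~\cite{gabbay:nomr-jv}); note the context $\Delta$ may be kept fixed because the side-conditions of $\rulefont{\#X}$ and $\rulefont{{\aleq}X}$ test membership of $\pi^\mone(a)\#X$ in $\Delta$ in a way invariant under permuting the conclusion.

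For the freshness statement, the cases $\rulefont{\#ab}$, $\rulefont{\#[a]}$, $\rulefont{\#[b]}$, $\rulefont{\#\tf f}$ are structural: $s\sigma$ has the same outermost shape as $s$, so one applies the induction hypothesis to the premises and re-applies the same rule. The only case in which $\sigma$ does real work is $\rulefont{\#X}$, where $s\equiv\pi\cdot X$ and $\pi^\mone(b)\#X\in\Delta'$; then $\pi^\mone(b)\#(X\sigma)\in\Delta'\sigma$, so the hypothesis gives $\Delta\cent\pi^\mone(b)\#(X\sigma)$, and since $(\pi\cdot X)\sigma\equiv\pi\cdot(X\sigma)$, equivariance along $\pi$ yields $\Delta\cent b\#(\pi\cdot X)\sigma$.

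For the $\aleq$ statement, $\rulefont{{\aleq}a}$ is immediate; $\rulefont{{\aleq}\tf f}$ and $\rulefont{{\aleq}[a]}$ are structural as above; and $\rulefont{{\aleq}[b]}$ is structural once we observe that its premise $(b\ a)\cdot t\aleq u$ gives $\Delta\cent((b\ a)\cdot t)\sigma\aleq u\sigma$ by the induction hypothesis, with $((b\ a)\cdot t)\sigma\equiv(b\ a)\cdot(t\sigma)$ (Lemma~\ref{lem.comm}), while its premise $b\#t$ gives $\Delta\cent b\#(t\sigma)$ by the freshness statement. The substantive case is $\rulefont{{\aleq}X}$: here $s\equiv\pi\cdot X$, $t\equiv\pi'\cdot X$, and $a\#X\in\Delta'$ for every $a$ with $\pi(a)\neq\pi'(a)$; writing $u\equiv X\sigma$ we obtain $\Delta\cent a\#u$ for all such $a$, and must derive $\Delta\cent\pi\cdot u\aleq\pi'\cdot u$. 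Using $\aleq$-equivariance to replace $\pi,\pi'$ by $\Id,\pi'^\mone\circ\pi$, and noting $\f{nontriv}(\pi'^\mone\circ\pi)=\{a\mid\pi(a)\neq\pi'(a)\}$, this reduces to the key auxiliary claim: \emph{if $\Delta\cent a\#u$ for every $a\in\f{nontriv}(\rho)$ then $\Delta\cent\rho\cdot u\aleq u$}, proved by induction on $u$. The atom case is trivial ($c\in\f{nontriv}(\rho)$ would force the underivable $\Delta\cent c\#c$, so $\rho$ fixes $c$); for $u\equiv\tau\cdot Y$ one reads the memberships $a\#Y\in\Delta$ required by $\rulefont{{\aleq}X}$ off the hypotheses $\Delta\cent\tau(a)\#\tau\cdot Y$, which as $\rulefont{\#X}$-instances say exactly $\tau^\mone(\tau(a))\#Y\in\Delta$; the term-former case is structural. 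The abstraction case $u\equiv[c]v$ is the delicate one. If $\rho(c)=c$, apply $\rulefont{{\aleq}[a]}$ and the induction hypothesis on $v$: every $a\in\f{nontriv}(\rho)$ is $\neq c$, so $\Delta\cent a\#[c]v$ is a $\rulefont{\#[b]}$-instance and hence yields $\Delta\cent a\#v$. If $\rho(c)\neq c$, apply $\rulefont{{\aleq}[b]}$, which needs (i) $\Delta\cent c\#\rho\cdot v$ and (ii) $\Delta\cent(c\ \rho(c))\cdot\rho\cdot v\aleq v$. For (i): $\rho^\mone(c)\in\f{nontriv}(\rho)$ and $\rho^\mone(c)\neq c$, so $\Delta\cent\rho^\mone(c)\#[c]v$ is a $\rulefont{\#[b]}$-instance giving $\Delta\cent\rho^\mone(c)\#v$, whence (i) by equivariance along $\rho$. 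For (ii): a short permutation computation gives $\f{nontriv}((c\ \rho(c))\circ\rho)\subseteq\f{nontriv}(\rho)\setminus\{c\}$, so for each $a$ in it we again have $a\neq c$ and extract $\Delta\cent a\#v$ from $\Delta\cent a\#[c]v$ via $\rulefont{\#[b]}$; the induction hypothesis on $v$ then yields $\Delta\cent((c\ \rho(c))\circ\rho)\cdot v\aleq v$, i.e. (ii) after $((c\ \rho(c))\circ\rho)\cdot v\equiv(c\ \rho(c))\cdot\rho\cdot v$ (Lemma~\ref{lem.equiv.perm}).

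Finally, the displayed ``in particular'' consequences follow by taking $\sigma:=\id$: then $s\sigma\equiv s$, $t\sigma\equiv t$, $\Delta'\sigma=\Delta'$, and the side-hypothesis reduces to the triviality that every primitive freshness in a context is derivable from that context by one use of $\rulefont{\#X}$ with the identity permutation. The main obstacle is precisely the $\rulefont{{\aleq}X}$ case and the abstraction sub-case $[c]v$ with $\rho(c)\neq c$ of its auxiliary claim: the freshness hypotheses about $[c]v$ say nothing at the bound atom $c$, so the freshnesses needed for the body $v$ must be recovered indirectly, via equivariance and a careful account of exactly which atoms the composite permutations move.
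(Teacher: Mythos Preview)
Your proof is correct and follows the same approach the paper indicates --- induction on the derivation rules of Figure~\ref{fig.rules.equality} --- only you carry it out in full where the paper merely says ``by routine inductions''. In particular, the auxiliary claim you isolate for the $\rulefont{{\aleq}X}$ case (if $\Delta\cent a\#u$ for every $a\in\f{nontriv}(\rho)$ then $\Delta\cent\rho\cdot u\aleq u$) and your appeal to equivariance of freshness are standard lemmas in the nominal literature (the latter appears later in the paper as Lemma~\ref{lemm.fresh.equivar}), so your expansion is faithful to what ``routine'' is hiding.
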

\begin{proof}
By routine inductions on the rules in Figure~\ref{fig.rules.equality}. 
\end{proof}

\section{Nominal algebra and nominal rewriting}
\label{ssec.na.theories}

In this section we define notions of equational reasoning and rewriting
over nominal terms.  Nominal terms have a native notion of binding,
which theories inherit and can exploit to axiomatise properties of
binding operators (e.g. it is direct and natural to axiomatise
$\beta$-equivalence \cite{gabbay:lamcna}).

\begin{defn}
\label{defn.judgements}
We introduce two new judgement forms:
\begin{itemize*}
\item
An \deffont{equality judgement} is a tuple ${\Delta\cent s=t}$ of a freshness context and two terms. 
\item
A \deffont{rewrite judgement} is a tuple ${\Delta\cent s\to  t}$ of a freshness context and two terms. 
\end{itemize*}
We may write `${\varnothing \cent}$' as `${\cent}$'.
\label{defn.theory}

We also introduce two notions of theory --- one for equality judgements, and one for rewrite judgements:
\begin{itemize*}
\item
An \deffont{equational theory} ${\theory{T} = (\Sigma, \f{Ax})}$ is a pair of a signature $\Sigma$ and a possibly infinite set of equality judgements $\f{Ax}$ in that signature; we call them \deffont{axioms}.
\item
A \deffont{rewrite theory} ${\theory{R} = (\Sigma, \f{Rw})}$ is a pair of a signature $\Sigma$ and a possibly infinite set of rewrite judgements $\f{Rw}$ in that signature; we call these \deffont{rewrite rules}.
\end{itemize*}
We may omit $\Sigma$, identifying $\theory{T}$ with  $\f{Ax}$ and  $\theory{R}$ with
$\f{Rw}$ when the signature is clear from the context.
\end{defn}

\begin{xmpl}
The rewrite rules $\rulefont{\beta_{\app}}$, $\rulefont{\beta_{\vr}}$, $\rulefont{\beta_\epsilon}$, $\rulefont{\beta_\fn}$, and $\rulefont{\eta}$ define the rewrite theory $\beta\eta$ for $\beta$- and $\eta$-reduction in the $\lambda$-calculus. 

Note the use of a freshness context in rule  $\rulefont{\beta_\fn}$ to ensure that free $\lambda$-calculus variables are not captured. 
In rule $\rulefont{\beta_\epsilon}$ we use a freshness context to discard the argument when it is not needed. In the $\eta$ rule, the freshness context formalises the usual condition on the bound variable. See~\cite{gabbay:nomr-jv} for more examples of nominal rewrite rules. 

If we replace $\to$ by $=$ we obtain an equational theory. More examples of nominal equational theories can be found in~\cite{gabbay:nomuae}.
\end{xmpl}

\begin{defn}
\label{defn.positions}
A \deffont{position} $C$ is a pair $(s,X)$ of a term and a distinguished unknown $X$ that occurs precisely once in $s$, as $\id\act X$.
If $C=(s,X)$ then we write $C[t]$ for $s[X\sm t]$. 
\end{defn}

We are now ready to define notions of derivable equality, and rewriting:

\begin{defn}
\label{rewrite-step}
Below we write $\Delta\cent(\phi_1,\ldots,\phi_n)$ for the judgements $\Delta\cent \phi_1$, \ldots, $\Delta\cent\phi_n$.
\begin{itemize*}
\item
\deffont{Nominal rewriting:} The \emph{one-step rewrite relation} $\Delta\cent s\rewriteswith{R} t$
is the least relation such that  for every $(\nabla\cent l\rewriteswith{} r)\in\theory R$, freshness context $\Delta$, position $C$, term $s'$, permutation $\pi$, and substitution $\theta$, 
\begin{equation}
\label{eq.nr}
\hspace{-1em}
\begin{prooftree}
s\equiv C[s']
\qquad \Delta\cent\bigl(\nabla\theta,\quad s' \aleq \pi\act (l\theta),\quad C[\pi\act (r\theta)]\aleq t\bigr)
\justifies
\Delta\centt{} s\rewriteswith{R} t 
\using\rulefont{Rew_{\nabla\cent l\rewriteswith{} r}}
\end{prooftree} .
\end{equation}

The \emph{rewrite relation} $\Delta\centt{R} s\rewriteswith{} t$ is the reflexive transitive closure of the one-step rewrite relation, that is, the least relation that includes the one-step rewrite relation and such that:
\begin{itemize*}
\item
for all $\Delta$ and $s$: $\Delta\centt{R} s\rewriteswith{} s'$ if $\Delta\cent s\aleq s'$ (the native notion of equality of nominal terms is $\alpha$-equality); 
\item
for all $\Delta$, $s$, $t$, $u$:  $\Delta\centt{R} s\rewriteswith{} t$ and $\Delta\centt{R} t\rewriteswith{} u$ implies
$\Delta\centt{R} s\rewriteswith{} u$.
\end{itemize*}
If $\Delta\centt{R} s\rewriteswith{} t$ holds, we say that $s$ rewrites to $t$  in the context $\Delta$.  
\item
\deffont{(Nominal algebra) equality:} $\Delta\centt{T} s=t$ is the least transitive reflexive symmetric relation such that for every $(\nabla\cent l=r)\in\theory T$, freshness context $\Delta$, position $C$, permutation $\pi$, substitution $\theta$, and fresh $\Gamma$ (so if $a\#X\in\Gamma$ then $a\not\in\f{atms}(\Delta,s,t)$), 
\begin{equation}
\label{eq.na}
\begin{prooftree}
\Delta,\Gamma\cent \bigl(\nabla\theta,\quad s\aleq C[\pi\act (l\theta)],\quad C[\pi\act (r\theta)]\aleq t\bigr)
\justifies
\Delta\centt{T} s=t 
\using\rulefont{Axi_{\nabla\cent l=r}}
\end{prooftree} .
\end{equation}
\end{itemize*}
\end{defn}

We illustrate \eqref{eq.nr} and \eqref{eq.na} with examples.
\begin{xmpl}
\begin{itemize*}
\item
Consider the theories $\cent [a]X\to X$ and $\cent [a]X=X$.  We can
show that $[b][a]a$ rewrites to $[a]b$ in the empty freshness context,
that is, $\centt{\cent [a]X \to X} [b][a]a \to [a]b$. For this, we first use reflexivity
to transform $[b][a]a$ into $[a][b]b$ and then apply the rewrite rule
at position $C=([a]X,X)$. We can also show $\centt{\cent [a]X=X}
[b][a]a = [a]b$.
\item
Consider the rewrite theory $\beta \eta$ defining $\beta$- and $\eta$-reduction in the $\lambda$-calculus (see the Introduction). We can show that $\centt{\beta\eta} \tf{app}(\tf{lam}([a]\tf{app}(a,a)),b) \to \tf{app}(b,b)$ using rules $\rulefont{\beta_{\app}}$ and $\rulefont{\beta_{\vr}}$. 
\end{itemize*}
\end{xmpl}

\subsection{Equivalence with the literature}

The notions of equality and
rewriting in \eqref{eq.nr} and \eqref{eq.na} correspond to those 
in~\cite{gabbay:nomuae} and \cite{gabbay:nomr-jv} respectively.
However, the presentation of \eqref{eq.nr} and \eqref{eq.na} is original to this paper.
Arguably, Definition~\ref{rewrite-step} contains the clearest
presentation of nominal rewriting and nominal algebra so far.  It is
certainly the most concise and it makes it easier
 to compare and constrast the two notions --- to bring
out what they have in common, and what is different.

Some checking needs to be done to verify that \eqref{eq.nr} and
\eqref{eq.na} coincide with  nominal rewriting and nominal
algebra as presented in the literature.  
 All the main issues are indicated in
the following two short sketches:

\begin{rmrk}[Nominal rewriting]
\label{rmrk.nomr}
\eqref{eq.nr} corresponds to Definition~47 in Subsection~5.2 of \cite{gabbay:nomr-jv}. 
The correspondence is clear except that Definition~47 does not include a $\pi$.
This is because in \cite{gabbay:nomr-jv} rewrite theories (Definition~\ref{defn.theory} in this paper) have the additional property that they be \emph{equivariant} (Definition~4.2 of \cite{gabbay:nomr-jv}). This means that if $R\in\theory R$ then $R^\pi\in\theory R$ ($R^\pi$ is $R$ with $\pi$ applied to all atoms).
It is not hard to use Lemma~41 and part~(3) of Theorem~50 in \cite{gabbay:nomr-jv} to demonstrate that equivariance has the same effect as the $\pi$ in \eqref{eq.nr}, and indeed, if $\Delta\centt{R} s\rewriteswith{} t$ then $\Delta\centt{R} \pi\act s\to\pi\act t$. 

\end{rmrk}

\begin{rmrk}[Nominal algebra]
\eqref{eq.na} corresponds to Definition~3.10 and to the rules in Figures~1 and~2 in \cite{gabbay:nomuae}. 
The $C$ corresponds to the congruence rules \rulefont{cong[]} and \rulefont{cong\tf f}; the $\pi$ corresponds to the $\pi$ in \rulefont{ax} (modulo the same issue with $r^\pi$ versus $\pi\act r$ mentioned in Remark~\ref{rmrk.nomr}); \rulefont{perm} is built into $\aleq$.

Figure~2 of \cite{gabbay:nomuae} has an extra rule \rulefont{fr}, which generates a fresh atom. 
This corresponds to the fresh context $\Gamma$ in \eqref{eq.na}.
However, in \eqref{eq.na} the fresh atoms are generated `all at once', whereas in Figure~2 of \cite{gabbay:nomuae} fresh atoms may be generated at any point during equality reasoning.

We inspect the rules in Figure~2 of \cite{gabbay:nomuae} and see that we can commute an instance of \rulefont{fr} down through the other derivation rules; \rulefont{fr} is a structural rule, which adds freshness assumptions and does not affect the subgoal to be proved. 

If extra atoms in the derivation `accidentally clash' with the atom generated by the instance of \rulefont{fr},
then we rename the fresh atom in the subderivation to be `even fresher'.
The reader familiar with the proof of weakening for first-order logic can recall how we rename the bound variable in the $\forall$-right rule to be fresh for the weakened context; the proof obligation here is identical and does not involve any `nominal' elaborations.

Formally, an explicit inductive argument or the principle of ZFA equivariance \cite{gabbay:newaas-jv} prove that fresh atoms do not matter up to renaming, so the renamed subderivation is still a valid subderivation.
The interested reader is also referred to Lemma~5.10 in \cite{gabbay:oneaah-jv} where a similar result is stated and proved in full detail, of a more complex system. 
\end{rmrk}

\section{Soundness and completeness of nominal rewriting with respect to nominal algebra}
\label{sec:compl}
Theorem~\ref{prop:RimpliesT} and Theorem~\ref{thrm.compl} describe how nominal rewriting  relates to nominal algebra.

\begin{defn}
\label{defn.presentation}
Suppose $\theory T$ is an equational theory and \theory R is a rewrite
theory. We say that $\theory R$ is a \deffont{presentation} of $\theory
T$ if 
$$
\nabla\cent s=t\in\theory T\quad\liff\quad
(\nabla\cent s\rewriteswith{} t\in\theory R \ \ \lor\ \ \nabla\cent t\rewriteswith{} s\in\theory R) .
$$ 

We write $\Delta\centt{R} s\tworewriteswith{} t$ for the symmetric 
closure $\Delta\centt{R} s\rewriteswith{} t$. 
\end{defn}

\begin{prop}[Soundness]
\label{prop:RimpliesT}
Suppose $\theory R$ is a presentation of $\theory T$. 

Then $\Delta\centt{R} s\tworewriteswith{} t$ implies $\Delta\centt{T} s=t$.
\end{prop}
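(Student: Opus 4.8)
The plan is to reduce the statement to the one-step case. Since nominal algebra equality $\centt{T}$ is symmetric by definition, it is enough to show that $\Delta\centt{R}s\rewriteswith{}t$ implies $\Delta\centt{T}s=t$; the symmetric closure $\tworewriteswith{}$ then follows at once. The rewrite relation $\rewriteswith{}$ is the least relation containing the one-step relation, containing the pairs identified by $\aleq$, and closed under transitivity, so I would argue by induction on this construction. The transitivity step is discharged by transitivity of $\centt{T}$; the $\aleq$-clause (``$\Delta\cent s\aleq s'$ implies $\Delta\centt{R}s\rewriteswith{}s'$'') is discharged because $\centt{T}$ contains $\aleq$ --- just as for the rewrite relation, $\alpha$-equivalent terms are identified, so this is immediate from the definition. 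The only step with real content is that a single one-step rewrite $\Delta\cent s\rewriteswith{R}t$ entails $\Delta\centt{T}s=t$.

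For that step, assume $\Delta\cent s\rewriteswith{R}t$ is an instance of $\rulefont{Rew_{\nabla\cent l\rewriteswith{} r}}$ for some rule $(\nabla\cent l\rewriteswith{} r)\in\theory R$, with witnessing position $C$, permutation $\pi$, substitution $\theta$ and subterm $s'$; thus $s\equiv C[s']$ and $\Delta\cent\nabla\theta$, $\Delta\cent s'\aleq\pi\act(l\theta)$, $\Delta\cent C[\pi\act(r\theta)]\aleq t$. Since $\theory R$ is a presentation of $\theory T$, instantiating the defining equivalence at $l$ and $r$ and using $(\nabla\cent l\rewriteswith{} r)\in\theory R$ yields $(\nabla\cent l=r)\in\theory T$. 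I would then apply $\rulefont{Axi_{\nabla\cent l=r}}$ with the \emph{same} $C$, $\pi$, $\theta$ and with empty fresh context $\Gamma=\varnothing$ (which meets the side-condition on $\Gamma$ vacuously). Its premises are $\Delta\cent\nabla\theta$, $\Delta\cent s\aleq C[\pi\act(l\theta)]$ and $\Delta\cent C[\pi\act(r\theta)]\aleq t$: the first and third are hypotheses we already have, and the second follows from $s\equiv C[s']$ together with $\Delta\cent s'\aleq\pi\act(l\theta)$, provided $\aleq$ is a congruence for positions, i.e. $\Delta\cent u\aleq v$ implies $\Delta\cent C[u]\aleq C[v]$. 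This gives $\Delta\centt{T}s=t$.

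The side facts are routine. That $\centt{T}$ is reflexive, transitive, symmetric and contains $\aleq$ is part of (or immediate from) its definition; congruence of $\aleq$ for positions is a short induction on the underlying term of $C$, where the only non-trivial case, an abstraction, is handled by $\rulefont{{\aleq}[a]}$ since the abstracted atom is the same on both sides. The one point needing attention is bureaucratic, not mathematical: $\rulefont{Axi}$ carries a fresh context $\Gamma$ absent from $\rulefont{Rew}$, but no fresh atoms are ever needed in the translation, so $\Gamma=\varnothing$ works; and one must read the presentation condition with a little care --- instantiating it at the rule's own sides returns exactly the axiom $(\nabla\cent l=r)$ invoked above, so no re-orientation is required (and symmetry of $\centt{T}$ would absorb it if it were). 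I therefore expect no genuine obstacle: the argument is essentially a near-verbatim match between the premises of $\rulefont{Rew}$ and those of $\rulefont{Axi}$ with $\Gamma$ empty.
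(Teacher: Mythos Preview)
Your proposal is correct and follows essentially the same approach as the paper: a routine induction on the derivation of $\Delta\centt{R}s\tworewriteswith{}t$, with the only substantive case being a one-step rewrite via $\rulefont{Rew_{\nabla\cent l\rewriteswith{} r}}$, translated to $\rulefont{Axi_{\nabla\cent l=r}}$ using the same $C$, $\pi$, $\theta$ and $\Gamma=\varnothing$, together with the congruence fact that $\Delta\cent s'\aleq\pi\act(l\theta)$ implies $\Delta\cent C[s']\aleq C[\pi\act(l\theta)]$. Your treatment is more explicit about the reflexive/transitive/symmetric closure bookkeeping, but the mathematical content is identical.
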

\begin{proof}
By a routine induction on the derivation $\Delta\centt{R} s\tworewriteswith{} t$.
We briefly sketch the case of \rulefont{Rew_{\nabla\cent l\rewriteswith{} r}} for $\nabla\cent l=r\in\theory T$.

Suppose for some $C$, $\theta$, and $\pi$, 
$$
s\equiv C[s']\quad\text{and}\quad \Delta\cent (\nabla\theta,\quad s'\aleq \pi\act (l\theta),\quad C[\pi\act (r\theta)]\aleq t).
$$
Let $\Gamma=\varnothing$.
It is a fact that if $\Delta\cent s'\aleq \pi\act (l\theta)$ then $\Delta \cent C[s'] \aleq C[\pi\act (l\theta)]$.
We now easily obtain an instance of \rulefont{Axi_{\nabla\cent l=r}}.
\end{proof}

\begin{rmrk}
\label{lemm.not.so}
Suppose $\theory R$  is a presentation of $\theory T$. 
It is not necessarily the case that $\Delta\centt{T} s=t$ implies $\Delta\centt{R} s\tworewriteswith{} t$.
To see this, take $\theory T = \{a\#X\cent X =  \tf{f}(X)\}$ and 
$\theory R=\{a\#X\cent X\rewriteswith{} \tf{f}(X)\}$.
Then $\centt{T} X=\tf{f}(X)$ (using $\rulefont{Axi}$ with $\Gamma = a \#X$),
but $\not\centt{R} X\tworewriteswith{} \tf{f}(X)$.
\end{rmrk}
\begin{thrm}[Quasi-Completeness]
\label{thrm.compl}
Suppose $\theory R$  is a presentation of $\theory T$. 

Then $\Delta\centt{T} s=t$ implies that there exists some fresh $\Gamma$ (so if $a\#X\in\Gamma$ then $a\not\in\f{atms}(\Delta,s,t)$) such that $\Delta,\Gamma\centt{R} s\tworewriteswith{} t$.
\end{thrm}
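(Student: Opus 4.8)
The plan is to induct on the derivation of $\Delta\centt{T} s=t$, i.e. on the closure rules that build the nominal-algebra equality relation (reflexivity, symmetry, transitivity, and the axiom rule $\rulefont{Axi_{\nabla\cent l=r}}$), and show in each case that $\Delta,\Gamma\centt{R} s\tworewriteswith{} t$ for a suitably fresh $\Gamma$. The reflexivity, symmetry and axiom base cases are essentially immediate: for $\rulefont{Axi}$ with premise $\Delta,\Gamma\cent(\nabla\theta,\ s\aleq C[\pi\act(l\theta)],\ C[\pi\act(r\theta)]\aleq t)$ we use the fact (the same one invoked in the proof of Proposition~\ref{prop:RimpliesT}) that $\Delta,\Gamma\cent s'\aleq\pi\act(l\theta)$ gives $\Delta,\Gamma\cent C[s']\aleq C[\pi\act(l\theta)]$, so that since $\theory R$ is a presentation of $\theory T$ we can read off an instance of $\rulefont{Rew_{\nabla\cent l\rewriteswith{} r}}$ (or its reverse) over the enlarged context $\Delta,\Gamma$, hence $\Delta,\Gamma\centt{R} s\tworewriteswith{} t$; reflexivity is handled by the clause that $\Delta\centt{R} s\rewriteswith{} s'$ whenever $\Delta\cent s\aleq s'$.

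The real work is the transitivity case, and this is where the mismatch with Remark~\ref{lemm.not.so} has to be confronted. Suppose $\Delta\centt{T} s=t$ comes from $\Delta\centt{T} s=u$ and $\Delta\centt{T} u=t$. By the induction hypothesis there are fresh contexts $\Gamma_1$ (fresh for $\Delta,s,u$) and $\Gamma_2$ (fresh for $\Delta,u,t$) with $\Delta,\Gamma_1\centt{R} s\tworewriteswith{} u$ and $\Delta,\Gamma_2\centt{R} u\tworewriteswith{} t$. These two derivations live over different contexts, and worse, $\Gamma_i$ may mention atoms occurring in $u$ but not in $s,t$ — exactly the situation that makes the naive "$=$ implies $\tworewriteswith{}$" statement false. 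The fix is: first choose $\Gamma_1,\Gamma_2$ (by $\alpha$-renaming the derivations, which is legitimate since fresh atoms do not matter up to renaming) so that they are compatible — they constrain disjoint sets of fresh atoms, all of which are fresh for $\Delta,s,u,t$ — and then set $\Gamma=\Gamma_1\cup\Gamma_2$. By Weakening (Lemma~\ref{lemm.weakening}, with $\sigma=\id$) both rewrite derivations lift to the common context $\Delta,\Gamma$, giving $\Delta,\Gamma\centt{R} s\tworewriteswith{} u$ and $\Delta,\Gamma\centt{R} u\tworewriteswith{} t$; composing them via the transitivity clause of the rewrite relation yields $\Delta,\Gamma\centt{R} s\tworewriteswith{} t$. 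Note that $\Gamma$ is still fresh for $\Delta,s,t$ — it is allowed to mention atoms of $u$, since the theorem only requires freshness with respect to $\Delta,s,t$ — and this is precisely the slack that makes the weaker statement provable where the stronger one of Remark~\ref{lemm.not.so} fails.

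I expect the main obstacle to be bookkeeping around the fresh contexts in the transitivity step: making precise that the derivations of the two induction hypotheses can be renamed to use disjoint, mutually fresh atoms, and that Weakening applies to an entire multi-step rewrite derivation and not just a single step (the latter is an easy secondary induction on the length of $\rewriteswith{}$, pushing Lemma~\ref{lemm.weakening} through each one-step instance and through the $\aleq$-clause). One should also double-check the symmetry case, but since $\tworewriteswith{}$ is by definition the symmetric closure of $\rewriteswith{}$ this is trivial. Everything else is routine given Lemma~\ref{lemm.weakening} and the presentation hypothesis.
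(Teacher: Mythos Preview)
Your proposal is correct and follows the same overall strategy as the paper --- induction on the nominal-algebra derivation, with equivariance used to rename the fresh atoms introduced by instances of $\rulefont{Axi}$ so that they can be accumulated into a single $\Gamma$. The difference is purely organisational: the paper does a one-shot preprocessing pass over the whole derivation $\Pi$, renaming the $\Gamma_i$ attached to \emph{every} instance of $\rulefont{Axi}$ so that they are pairwise disjoint and globally fresh for all conclusions in $\Pi$, then takes $\Gamma=\bigcup_i\Gamma_i'$ and runs a routine induction on the cleaned-up derivation; you instead fold the renaming into the transitivity case of the induction itself, combining and renaming the two inductively-obtained fresh contexts at that point. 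Both approaches rest on the same ingredients (equivariance of derivations, Weakening lifted to multi-step rewriting), and both need the same observation that $\rulefont{Rew}$'s side condition $s\equiv C[s']$ is recovered from $\rulefont{Axi}$'s weaker $s\aleq C[\pi\act(l\theta)]$ by interposing an $\aleq$-step in the reflexive-transitive closure. The paper's packaging makes the induction genuinely routine once preprocessing is done; yours keeps the inductive invariant explicit but repeats the renaming argument at each transitivity step. Neither buys anything the other does not.
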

Note the `fresh $\Gamma$' on the side of nominal rewriting.
\begin{proof}
We work by induction on the derivation of $\Delta\centt{T} s=t$, write it $\Pi$.

The interesting case is \rulefont{Axi_{\nabla\cent l=r}} for some $\nabla\cent l=r\in\theory T$, of course.
There is only one argument in the proof that is not obvious: 
$\Pi$ is finite, so let us consider \emph{all} the finitely many instances of \rulefont{Axi} in $\Pi$; write them $I_1$, \ldots, $I_n$.
For each $1\leq i\leq n$,\ $I_i$ will involve $\nabla_i\cent l_i=r_i$, $C_i$, $\pi_i$, $\theta_i$, and a context $\Gamma_i$.
(Note that $\Delta$ is constant across all these instances.) 

Atoms in $\Gamma_i$ do not feature in $\Delta$, $C_i$, $\pi_i$, and $\theta_i$ --- they are `locally fresh'.
However, they might `accidentally' feature elsewhere in $\Pi$. 
It is a fact that because the atoms in $\Gamma_i$ do not feature in $\Delta$, $C_i$, $\pi_i$, and $\theta_i$, they do not feature in the conclusion of $I_i$. 
Therefore, it is a fact that we can rename these atoms so that they are fresh for all parts of $\Pi$ other than hypotheses of instances of \rulefont{Axi}, that is,  there exists a derivation $\Pi'$ of $\Delta\centt{T} s=t$ such that for each $1\leq i\leq n$ the respective $\Gamma_i'$ in respective instances $I_i'$ of  \rulefont{Axi} are fresh not only locally for the conclusion of $I_i'$, but also fresh globally for all conclusions of all $I_j'$ for $1\leq j\leq n$ in $\Pi'$. 
This `global freshness' condition is clearly preserved by taking subderivations.
We now take $\Gamma=\bigcup_i \Gamma_i'$, and the proof is by a routine induction on $\Pi'$.
Thus, an upper bound on $\Gamma$ is the maximal size of the $\Gamma_i$.

Note that although  $\rulefont{Rew_{\nabla\cent l=r}}$  appears to be more restrictive than $\rulefont{Axi_{\nabla\cent l=r}}$  in that \rulefont{Rew} requires $s \equiv C[s']$ and $s'$ $\alpha$-equivalent to an instance of a left-hand side, this is not an issue because the rewrite relation is transitive and includes the $\alpha$-equivalence relation.
\end{proof}

\section{Closed rewriting and nominal algebra}
\label{sec:closedrewriting} 

Theorem~\ref{thrm.compl} establishes a completeness result for nominal
rewriting modulo additional freshness constraints (the extra
$\Gamma$). 

This mismatch between nominal rewriting and nominal algebra could be
solved by including fresh atom generation in the definition of a
rewriting step.  But this comes at a cost --- the freshness
context may change along a rewrite derivation, and with it also the notion of $\alpha$-equivalence --- and is not needed for
large classes of systems, as we show below.

In this
section, we show that closed nominal rewriting is complete for nominal
algebra equality when all the axioms are \emph{closed}.  

Although
there are interesting systems, such as the axiomatisation of the
$\pi$-calculus~\cite{FernandezM:nomrng,gabbay:nomr-jv}, which are not
closed, this result has many applications: all the systems that arise
from functional programming (including the axiomatisation of the
$\lambda$-calculus) are closed, and all the systems that can be
specified in a standard higher-order rewriting formalism are also
closed (see~\cite{FernandezM:nomrng}).

\subsection{The definition of closed rules and closed rewriting}

\begin{defn}[Terms-in-context and nominal matching]
\label{defn.matching}
A \deffont{term-in-context} is a pair $\Delta\cent s$ of a freshness context and a term.

A \deffont{nominal matching problem} is a pair of terms-in-context 
$$(\nabla\cent l)\qleq (\Delta \cent s)
\quad\text{where}\ \ 
\f{unkn}(\nabla\cent l)\cap\f{unkn}(\Delta\cent s)=\varnothing. 
$$
A \deffont{solution} to this problem is a substitution $\sigma$ such that
$$
\Delta\cent \nabla\sigma 
\quad\text{and}\quad
\Delta\cent l\sigma\aleq s
\quad\text{and}\ \ 
\f{dom}(\sigma)\subseteq \f{unkn}(\nabla\cent l) . 
$$
\end{defn}

\begin{rmrk}
Nominal matching is decidable~\cite{gabbay:nomu-jv},
and can be solved in linear time~\cite{FernandezM:wollic08}.
\end{rmrk}

\begin{defn}[Freshened variants]
If $t$ is a term, we say that $\nw{t}$ is a \deffont{freshened variant} of $t$ when $\nw{t}$ has the same structure as $t$, except that the atoms and unknowns have been replaced by `fresh' atoms and unknowns (so they are not in $\f{atms}(t)$ and $\f{unkn}(t)$, and perhaps are also fresh with respect to some atoms and unknowns from other syntax, which we will always specify).
We omit an inductive definition.

Similarly, if $\nabla$ is a freshness context then $\nw{\nabla}$ will denote a freshened variant of $\nabla$ (so if $a\#X\in\nabla$ then $\nw{a}\#\nw{X}\in\nw{\nabla}$, where $\nw{a}$ and $\nw{X}$ are chosen fresh for the atoms and unknowns appearing in $\nabla$).

We may extend this to other syntax, like equality and rewrite judgements.

Note that if $\nw{\nabla}\cent \nw{l}\to \nw{r}$ is a freshened variant of $\nabla\cent l\to r$ then $\f{unkn}(\nw{\nabla}\cent \nw{l}\to \nw{r})\cap\f{unkn}(\nabla\cent l\to r)=\varnothing$.
\end{defn}

\begin{xmpl}
For example:
\begin{itemize*}
\item
$[\nw{a}][\nw{b}]\nw{X}$ is a freshened variant of $[a][b]X$,\ $\nw{a}\#\nw{X}$ is a freshened variant of $a\#X$,\ and $\varnothing\cent \nw{a}\rewriteswith{} \nw{b}$ is a freshened variant of $\varnothing\cent a\rewriteswith{} b$.
\item
Neither $[\nw{a}][\nw{a}]\nw{X}$ nor $[\nw{a}][\nw{b}]X$ are freshened variants of $[a][b]X$: the first, because we have wrongly identified two distinct atoms when we freshened them; the second, because we did not freshen $X$.
\end{itemize*}
\end{xmpl}

\begin{defn} 
\label{defn.closed}
A term-in-context $\nabla\cent l$ is \deffont{closed} if there exists a solution for 
the matching problem 
\begin{equation}
\label{eq.matching.problem}
(\nw{\nabla}\cent \nw{l})\ \ \qleq\ \ (\nabla ,\f{atms}(\nw{\nabla},\nw{l})\#\f{unkn}(\nabla,l)\cent l) . 
\end{equation} 
\end{defn}

\begin{lemm}
\label{lemm.unpack.closed.matching}
$\nabla\cent l$ is closed when there exists a substitution $\sigma$ with $\f{dom}(\sigma)\subseteq \f{unkn}(\nw{\nabla}\cent \nw{l})$ such that
$
\nabla,\f{atms}(\nw{\nabla},\nw{l})\#\f{unkn}(\nabla,l) \cent (\nw{\nabla}\sigma,\ l\aleq \nw{l}\sigma). 
$
\end{lemm}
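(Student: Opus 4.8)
The plan is to unfold the two definitions involved and observe that the stated condition is, after a trivial reorientation, literally the assertion that the matching problem of Definition~\ref{defn.closed} is solvable. By Definition~\ref{defn.closed}, $\nabla\cent l$ is closed exactly when the matching problem~\eqref{eq.matching.problem}, namely $(\nw{\nabla}\cent\nw{l})\qleq(\nabla,\f{atms}(\nw{\nabla},\nw{l})\#\f{unkn}(\nabla,l)\cent l)$, has a solution. First I would record that this problem is well-formed: since $\nw{\nabla}\cent\nw{l}$ is a freshened variant of $\nabla\cent l$, its unknowns are disjoint from $\f{unkn}(\nabla,l)$, which is precisely the set of unknowns occurring in the right-hand term-in-context.

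Next I would apply Definition~\ref{defn.matching}. Writing $\Delta_0=\nabla,\f{atms}(\nw{\nabla},\nw{l})\#\f{unkn}(\nabla,l)$, a solution to that problem is a substitution $\sigma$ with $\f{dom}(\sigma)\subseteq\f{unkn}(\nw{\nabla}\cent\nw{l})$ such that $\Delta_0\cent\nw{\nabla}\sigma$ (here $\nw{\nabla}\sigma$ is the set of freshness constraints of Definition~\ref{defn.Stheta}, and the judgement asserts that each of them is derivable from $\Delta_0$) and $\Delta_0\cent\nw{l}\sigma\aleq l$. Reading the comma-separated list in the lemma's conclusion as the conjunction of judgements under the common context $\Delta_0$, this is exactly the stated condition, except that the lemma writes $l\aleq\nw{l}\sigma$ whereas Definition~\ref{defn.matching} gives $\nw{l}\sigma\aleq l$. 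To identify the two I would invoke symmetry of $\aleq$, a standard property of the system in Figure~\ref{fig.rules.equality} (see~\cite{gabbay:nomr-jv}). Putting these together shows that existence of such a $\sigma$ is equivalent to solvability of~\eqref{eq.matching.problem}, hence to closedness of $\nabla\cent l$.

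I do not expect any real obstacle here: the lemma is definitional bookkeeping, unpacking ``has a solution'' for one specific matching problem. The only two points needing a word of justification are the well-formedness of the matching problem and the reorientation of the $\alpha$-equivalence by symmetry of $\aleq$, and neither is substantial.
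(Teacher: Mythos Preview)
Your proposal is correct and matches the paper's treatment: the paper gives no explicit proof for this lemma, since it is immediate by unpacking Definitions~\ref{defn.matching} and~\ref{defn.closed}, which is precisely what you do. Your extra remarks on well-formedness and on the symmetry of $\aleq$ are harmless elaborations of points the paper leaves implicit.
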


\begin{defn}
\label{defn.closed.rewrite}
\begin{itemize*}
\item
Call $R=(\nabla\cent l\rewriteswith{} r)$ and $A=(\nabla\cent l=r)$ \deffont{closed} when $\nabla\cent (l,r)$ is closed\footnote{Here we use pair as a term former and apply the definition above.}.
\item
Given a rewrite rule $R=(\nabla\cent l\rewriteswith{} r)$ and a term-in-context $\Delta\cent s$, write 
$\Delta\cent s\rewriteswithc{R}t$
when there is some $\nw{R}$ a freshened variant of $R$ (so fresh for $R$, $\Delta$, $s$, and $t$), position $C$ and substitution $\theta$
such that
\begin{equation}
\label{eq.cnr}
\hspace{-2em} s\equiv C[s']
\ \ \ \text{and} \ \ \  
\Delta,\f{atms}(\nw{R})\mathrel{\#} \f{unkn}(\Delta,s,t)\cent (\nw{\nabla}\theta,\ s'{\aleq} \nw{l}\theta,\ C[\nw{r}\theta]{\aleq} t).
\end{equation}
We call this (one-step) \deffont{closed rewriting}.  

The \emph{closed rewrite relation} $\Delta\centt{R} s\rewriteswithc{} t$ is the reflexive transitive closure as in Definition~\ref{rewrite-step}.
\end{itemize*}
\end{defn}

The choice of freshened variant of $\nabla\cent l$ in
Definition~\ref{defn.closed} does not matter.  Similarly for
closed rewriting in Definition~\ref{defn.closed.rewrite}.  This
is related to the some/any property of the $\new$-quantifier
\cite{gabbay:newaas-jv}, and to the principle of ZFA equivariance
described e.g. in \cite[Theorem~A.4]{gabbay:nomuae}.  One way to look
at Definitions~\ref{defn.closed} and~\ref{defn.closed.rewrite} is that
the atoms in $\nw{\nabla}\cent \nw{l}$ occupy a `separate namespace'.

\begin{rmrk}
Closed nominal terms and rewriting were introduced in \cite{gabbay:nomr}.
$\Delta\cent s\rewriteswithc{R} t$ when $s$ rewrites to $t$ using a
  version of $R$ where the atoms and unknowns are renamed to be fresh.
  Renaming unknowns to be fresh is standard in rewriting, where
  variables in a rewrite rule are assumed distinct from those of the
  terms to be rewritten.  What is special about closed rewriting is
  that it applies a similar renaming to the atoms.

So for example, $\cent a\rewriteswith{a\to b} b$ and $\cent c\rewriteswith{a\to b} d$, but $\not\cent a\rewriteswithc{a\to b} b$ and $\not\cent c\rewriteswith{a\to b}d$.

A rule $R$ is closed when, intuitively, it is equal to any freshened variant $\nw{R}$ up to a substitution. 
$a\to b$ is not closed; the rules in \cite{gabbay:nomr-jv} for $\lambda$-calculus $\beta$-reduction are closed; those for $\pi$-calculus reduction are not closed.
\end{rmrk}

Comparing Definition~\ref{defn.closed.rewrite} (closed rewriting) with Definition~\ref{rewrite-step} (rewriting) we see they are very similar.
However, there are two key differences:
\begin{itemize*}
\item
The $\pi$ in \eqref{eq.nr} in Definition~\ref{rewrite-step} is not there in \eqref{eq.cnr} in Definition~\ref{defn.closed.rewrite}.
This $\pi$ can be very expensive \cite{CheneyJ:comeu}, so removing it greatly increases the efficiency of calculating closed nominal rewrites. 
\item
Atoms cannot `interact by name' in a closed rewrite step, because they are renamed.
\end{itemize*}

\subsection{Properties of closed rewriting, and connection with nominal algebra}

First we will prove a strengthening property for closed rewriting, for which we need some preliminary lemmas.

\begin{defn}
We define the substitution $\sigma\circ\pi$ by: 
$$
\begin{array}{r@{\ }c@{\ }l@{\qquad}l}
(\sigma\circ\pi)(X) &=& \pi\act(\sigma(X)) 
&\text{if }X\in\f{dom}(\sigma)
\\
(\sigma\circ\pi)(X)&&\text{undefined}
&\text{otherwise}.
\end{array}
$$
\end{defn}

\begin{lemm}
\label{lemm.commute.disjoint}
If $\f{atms}(s)\cap \f{nontriv}(\pi)=\varnothing$ then $\pi\act (s\sigma) \equiv s(\sigma\circ\pi)$.
\end{lemm}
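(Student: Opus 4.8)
The plan is to proceed by a straightforward structural induction on $s$, which matches the inductive definition of terms in Definition~\ref{defn.terms}. The hypothesis $\f{atms}(s)\cap\f{nontriv}(\pi)=\varnothing$ tells us exactly that $\pi$ fixes every atom mentioned syntactically in $s$; the point of the lemma is that such a $\pi$ can be pushed harmlessly through the substitution $\sigma$ onto the terms $\sigma$ introduces, which is precisely what $\sigma\circ\pi$ records.

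First I would handle the base cases. For $s\equiv a$: here $\f{atms}(s)=\{a\}$, so $a\notin\f{nontriv}(\pi)$, hence $\pi\act(a\sigma)\equiv\pi(a)\equiv a\equiv a(\sigma\circ\pi)$. For $s\equiv \pi'\act X$: if $X\notin\f{dom}(\sigma)$ then $(\pi'\act X)\sigma\equiv\pi'\act X$ and $\pi\act(\pi'\act X)\equiv(\pi\circ\pi')\act X$; but $\f{atms}(\pi'\act X)=\f{nontriv}(\pi')$ is disjoint from $\f{nontriv}(\pi)$, so $\pi\circ\pi'$ and $\pi'$ agree on every atom, giving $(\pi\circ\pi')\act X\equiv\pi'\act X\equiv(\pi'\act X)(\sigma\circ\pi)$ since $\sigma\circ\pi$ is also undefined on $X$. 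If instead $X\in\f{dom}(\sigma)$, then $(\pi'\act X)\sigma\equiv\pi'\act\sigma(X)$, so $\pi\act(\pi'\act X)\sigma\equiv\pi\act(\pi'\act\sigma(X))\equiv(\pi\circ\pi')\act\sigma(X)$ by Lemma~\ref{lem.equiv.perm}, and again $\pi\circ\pi'$ agrees with $\pi'$ on all atoms, while $(\pi'\act X)(\sigma\circ\pi)\equiv\pi'\act(\sigma\circ\pi)(X)\equiv\pi'\act(\pi\act\sigma(X))\equiv(\pi'\circ\pi)\act\sigma(X)$; these coincide because $\pi'\circ\pi$ and $\pi\circ\pi'$ need not be equal in general, but here the disjointness of $\f{nontriv}(\pi')$ and $\f{nontriv}(\pi)$ makes them commute. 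The inductive cases are routine: for $s\equiv[a]t$ we have $a\notin\f{nontriv}(\pi)$ and $\f{atms}(t)\subseteq\f{atms}(s)$, so $\pi\act([a]t\sigma)\equiv[\pi(a)](\pi\act(t\sigma))\equiv[a](t(\sigma\circ\pi))\equiv([a]t)(\sigma\circ\pi)$ using the induction hypothesis; for $s\equiv\tf f(t_1,\ldots,t_n)$ we push $\pi$ and $\sigma$ componentwise and apply the induction hypothesis to each $t_i$, noting $\f{atms}(t_i)\subseteq\f{atms}(s)$.

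The only place requiring a moment's care is the $\pi'\act X$ case, where one must observe that two permutations with disjoint non-trivial supports commute, and that a permutation composed with one whose support it is disjoint from acts as that one alone on the relevant atoms. I expect this to be the main (very minor) obstacle; everything else is bookkeeping with the definitions of the permutation and substitution actions (Definition~\ref{def.calculus.actions.permutation.object}) together with Lemma~\ref{lem.equiv.perm}.
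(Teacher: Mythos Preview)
Your inductive plan is the right one, and the cases for atoms, abstractions, and term-formers go through exactly as you describe. The paper omits a proof of this lemma, so there is nothing to compare against at the level of strategy.

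However, there is a genuine error in your treatment of the sub-case $s\equiv\pi'\act X$ with $X\notin\f{dom}(\sigma)$. You claim that because $\f{nontriv}(\pi')\cap\f{nontriv}(\pi)=\varnothing$, the permutations $\pi\circ\pi'$ and $\pi'$ ``agree on every atom''. They do not: take any $c\in\f{nontriv}(\pi)$; then $c\notin\f{nontriv}(\pi')$, so $\pi'(c)=c$ but $(\pi\circ\pi')(c)=\pi(c)\neq c$. Hence $(\pi\circ\pi')\act X$ and $\pi'\act X$ are \emph{not} syntactically identical terms. Indeed, in this sub-case the lemma is simply false as literally stated: with $s\equiv\id\act X$, $X\notin\f{dom}(\sigma)$, and any $\pi\neq\id$, the left-hand side is $\pi\act X$ while the right-hand side is $\id\act X$ (since $\f{dom}(\sigma\circ\pi)=\f{dom}(\sigma)$ by definition).

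The resolution is that the lemma carries an implicit side-condition $\f{unkn}(s)\subseteq\f{dom}(\sigma)$, which holds in the only place the lemma is invoked (the proof of Proposition~\ref{prop.r.implies.cr}, where $s$ is $\nw{l}$ or $\nw{r}$ and the substitution $\sigma\circ\theta$ covers all unknowns of $\nw{R}$). Under that assumption the offending sub-case never arises, and your argument for $X\in\f{dom}(\sigma)$ is correct: there you need $\pi\circ\pi'=\pi'\circ\pi$, which \emph{does} follow from disjoint supports, and you state this correctly. So either add the hypothesis $\f{unkn}(s)\subseteq\f{dom}(\sigma)$ to the lemma, or note explicitly that only this sub-case matters for the intended application.
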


\begin{lemm}
\label{lemm.nothing.silly}
\begin{enumerate}
\item
Suppose $a\not\in\f{atms}(s',\nw{l})$.
Then if $\Delta\cent s'\aleq \nw{l}\sigma$ then there exists $\sigma'$ such that $\Delta\cent \sigma(X)\aleq \sigma'(X)$ and $a\not\in\f{atms}(\sigma'(X))$, for all $X\in\f{unkn}(\nw{l})$.
\item
Suppose $a\not\in\f{atms}(t,\nw{r},C)$.
Then if $\Delta\cent C[\nw{r}\sigma]\aleq t$ then there exists some $\sigma'$ such that $\Delta\cent \sigma(X)\aleq\sigma'(X)$ and $a\not\in\f{atms}(\sigma'(X))$, for all $X\in\f{unkn}(\nw{r})$.
\end{enumerate}
\end{lemm}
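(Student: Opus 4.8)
The plan is to prove both parts by induction on the structure of the term $\nw{l}$ (respectively $\nw{r}$ in part~2), tracking how the atom $a$ can enter $\sigma(X)$ through the $\alpha$-equivalence derivation. The key observation is that if $a\notin\f{atms}(s',\nw{l})$ and $\Delta\cent s'\aleq \nw{l}\sigma$, then any occurrence of $a$ inside $\sigma(X)$ for some $X\in\f{unkn}(\nw{l})$ must be `invisible' from the point of view of the $\alpha$-equivalence check --- it sits under a permutation $\pi$ attached to $X$ in $\nw{l}$ with $\pi^{\mone}(a)$ moved away, or it is covered by a freshness already forced by the derivation. So the natural move is: for each $X\in\f{unkn}(\nw{l})$, I will show $\Delta\cent a\#(\nw{l}\sigma)$ restricted to the relevant subterm, hence (by the shape of rule $\rulefont{\#X}$ and Strengthening, Lemma~\ref{lemm.strengthening}) that $\sigma(X)$ can be replaced by an $\alpha$-equivalent term in which $a$ does not occur. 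Concretely, define $\sigma'(X)$ by picking a fresh atom $a'$ (fresh for everything in sight) and setting $\sigma'(X)\equiv (a\ a')\act\sigma(X)$ where appropriate, or more carefully, by recursing on the derivation.

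First I would set up the induction on $\nw{l}$. The base cases $\nw{l}\equiv b$ (an atom) and $\nw{l}\equiv\tf{f}()$ are trivial since $\f{unkn}(\nw{l})=\varnothing$. For $\nw{l}\equiv\tf{f}(l_1,\dots,l_n)$ or $\nw{l}\equiv[b]l_0$, the derivation of $\Delta\cent s'\aleq\nw{l}\sigma$ ends in $\rulefont{{\aleq}\tf f}$ or one of the abstraction rules, and I recurse on the subterms, combining the resulting $\sigma'$ components (they act on disjoint sets of unknowns, so the combination is unproblematic --- or, if unknowns are shared between subterms, I note that the shared $\sigma(X)$ must already avoid $a$ by the first component and reuse it). The crucial case is $\nw{l}\equiv\pi\act X$. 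Here $\Delta\cent s'\aleq(\pi\act X)\sigma$, i.e. $\Delta\cent s'\aleq\pi\act\sigma(X)$. Since $a\notin\f{atms}(\pi\act X)=\f{nontriv}(\pi)$, $\pi$ fixes $a$; and since $a\notin\f{atms}(s')$, we get $\Delta\cent a\#s'$ (derivable because $a$ does not occur in $s'$ --- a small auxiliary fact proved by induction on $s'$, possibly needing $a\#X\in\Delta$ for the unknowns of $s'$, but those unknowns are disjoint from $\f{unkn}(\nw{l})$ and we may assume the ambient context handles them, or we simply observe $a$ is globally fresh in the intended application). From $\Delta\cent a\#s'$ and $\Delta\cent s'\aleq\pi\act\sigma(X)$ we get $\Delta\cent a\#\pi\act\sigma(X)$, and since $\pi$ fixes $a$, $\Delta\cent a\#\sigma(X)$. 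Now take a fresh atom $a'$ and set $\sigma'(X)\equiv(a'\ a)\act\sigma(X)$; then $a\notin\f{atms}(\sigma'(X))$, and $\Delta\cent\sigma(X)\aleq\sigma'(X)$ follows from $\Delta\cent a\#\sigma(X)$ together with $\Delta\cent a'\#\sigma(X)$ (as $a'$ is fresh) by the standard fact that swapping two atoms both fresh for a term yields an $\alpha$-equivalent term.

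The main obstacle I anticipate is the bookkeeping when an unknown $X$ occurs \emph{several times} in $\nw{l}$, possibly under different permutations $\pi_1,\dots,\pi_k$. Then I must produce a \emph{single} $\sigma'(X)$ that simultaneously works for all these occurrences, i.e. satisfies $\Delta\cent\sigma(X)\aleq\sigma'(X)$, $a\notin\f{atms}(\sigma'(X))$, and keeps all the $\alpha$-equivalence subgoals derivable when $\sigma(X)$ is replaced by $\sigma'(X)$. The resolution is the argument above: each occurrence under $\pi_i$ (with $\pi_i$ fixing $a$, since $a\notin\f{nontriv}(\pi_i)$) forces $\Delta\cent a\#\sigma(X)$, so a single choice $\sigma'(X)\equiv(a'\ a)\act\sigma(X)$ with $a'$ globally fresh works uniformly, using Lemma~\ref{lem.comm} to commute the swapping through $\pi_i$. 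A secondary subtlety is part~2, where the occurrences of unknowns of $\nw{r}$ sit inside the context $C[\nw{r}\sigma]$ rather than at the top; but since $a\notin\f{atms}(C)$ as well, the derivation of $\Delta\cent C[\nw{r}\sigma]\aleq t$ decomposes down through $C$ exactly as in the structural cases, and the argument for the $\pi\act X$ leaves is identical. Finally I would remark that in both parts $\sigma'$ agrees with $\sigma$ outside $\f{unkn}(\nw{l})$ (resp. $\f{unkn}(\nw{r})$), which is all that is needed for the downstream strengthening result for closed rewriting.
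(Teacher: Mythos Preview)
Your inductive scaffold matches the paper's (induction on the structure of $\nw{l}$, decomposing through $\tf{f}$ and $[b]\text{-}$, with the crux at $\nw{l}\equiv\pi\act X$), and your treatment of part~2 via decomposing $C$ is essentially what the paper does. But your handling of the crucial case has a real gap.

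You propose to derive $\Delta\cent a\#\sigma(X)$ and then set $\sigma'(X)\equiv (a'\ a)\act\sigma(X)$. Two problems. First, as you yourself flag, $a\notin\f{atms}(s')$ does \emph{not} give $\Delta\cent a\#s'$: take $\nw{l}\equiv X$, $s'\equiv Y$, $\sigma(X)\equiv Y$, $\Delta=\varnothing$; then all hypotheses of the lemma hold but $\cent a\#Y$ is underivable. Appealing to ``the intended application'' does not prove the lemma as stated. Second --- and this one you did not notice --- even when $\Delta\cent a\#\sigma(X)$ \emph{does} hold, your chosen $\sigma'(X)\equiv (a'\ a)\act\sigma(X)$ generally fails the syntactic condition $a\notin\f{atms}(\sigma'(X))$. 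If $\sigma(X)$ mentions an unknown, say $\sigma(X)\equiv\rho\act Y$, then $(a'\ a)\act\sigma(X)\equiv((a'\ a)\circ\rho)\act Y$ and $\f{atms}$ of this is $\f{nontriv}((a'\ a)\circ\rho)$, which contains $a$. The swap gets recorded in the suspended permutation and re-introduces $a$ into $\f{atms}$.

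The paper's move sidesteps both issues in one line: in the case $\nw{l}\equiv\pi\act X$ it takes $\sigma'(X)\equiv\pi^{\mone}\act s'$. Since $a\notin\f{atms}(s')$ and $a\notin\f{nontriv}(\pi)$, a direct check on the definition of $\f{atms}$ gives $a\notin\f{atms}(\pi^{\mone}\act s')$ (no freshness derivation needed). And $\Delta\cent\sigma(X)\aleq\pi^{\mone}\act s'$ follows from the hypothesis $\Delta\cent s'\aleq\pi\act\sigma(X)$ by equivariance and symmetry of $\aleq$. Your worry about multiple occurrences of $X$ then dissolves: any occurrence yields \emph{some} suitable $\sigma'(X)$, and since the lemma only asks for $\Delta\cent\sigma(X)\aleq\sigma'(X)$ and $a\notin\f{atms}(\sigma'(X))$, picking one is enough.
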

\begin{proof}
For the first part, we construct $\sigma'$ by an induction on the structure of $\nw{l}$.
We sketch one case:
\begin{itemize*}
\item
The case $\nw{l}\equiv \pi\act X$.
\quad
By assumption $\Delta\cent s'\aleq \pi\act \sigma(X)$, where $a\not\in\f{nontriv}(\pi)$.
We choose $\sigma'(X)\equiv\pi^\mone\act s'$. 
\end{itemize*}
For the second part we work by induction on the derivation of $\Delta\cent C[\nw{r}\sigma]\aleq t$, using the rules in Figure~\ref{fig.rules.equality} to break down $C$ until we reach the first case (note that $\aleq$ is symmetric).
\end{proof}

\begin{lemm}
\label{lemm.aleq.sigma}
If $\Delta\cent\sigma(X)\aleq\sigma'(X)$ for all $X\in\f{unkn}(t)$ then $\Delta\cent t\sigma\aleq t\sigma'$.
\end{lemm}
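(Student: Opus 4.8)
\noindent The plan is to prove $\Delta\cent t\sigma\aleq t\sigma'$ by induction on the structure of the term $t$, using the congruence rules of Figure~\ref{fig.rules.equality} for every case except the variable case, which carries the only real content.

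First I would set up the induction: assume $\Delta\cent\sigma(X)\aleq\sigma'(X)$ for all $X\in\f{unkn}(t)$. If $t\equiv a$ then $t\sigma\equiv a\equiv t\sigma'$ and $\rulefont{{\aleq}a}$ applies. If $t\equiv[a]t_0$ then $\f{unkn}(t_0)=\f{unkn}(t)$, so the inductive hypothesis gives $\Delta\cent t_0\sigma\aleq t_0\sigma'$ and $\rulefont{{\aleq}[a]}$ concludes. If $t\equiv\tf f(t_1,\ldots,t_n)$ then each $\f{unkn}(t_i)\subseteq\f{unkn}(t)$, so the inductive hypothesis gives $\Delta\cent t_i\sigma\aleq t_i\sigma'$ for every $i$, and $\rulefont{{\aleq}\tf f}$ concludes.

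The hard part will be the variable case $t\equiv\pi\act X$. Here, using the convention $\sigma(X)\equiv\id\act X$ when $X\notin\f{dom}(\sigma)$ together with Lemma~\ref{lem.equiv.perm}, one has $t\sigma\equiv\pi\act\sigma(X)$ and $t\sigma'\equiv\pi\act\sigma'(X)$, and since $X\in\f{unkn}(t)$ the hypothesis gives $\Delta\cent\sigma(X)\aleq\sigma'(X)$. What is then needed is that $\aleq$ (and, for one subcase, $\#$) is preserved when a single permutation is applied to both sides while the context is left fixed: $\Delta\cent u\aleq v$ implies $\Delta\cent\pi\act u\aleq\pi\act v$, and $\Delta\cent a\#u$ implies $\Delta\cent\pi(a)\#\pi\act u$. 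I would prove these two statements by a routine simultaneous induction on derivations. The point to verify is that the side-conditions of the two `variable' rules $\rulefont{\#X}$ and $\rulefont{{\aleq}X}$ are invariant under post-composing the permutations with $\pi$: since $\pi$ is a bijection, $\pi_1(a)\neq\pi_2(a)$ iff $(\pi\circ\pi_1)(a)\neq(\pi\circ\pi_2)(a)$, and $(\pi\circ\pi_1)^\mone(\pi(a))=\pi_1^\mone(a)$; the abstraction rule $\rulefont{{\aleq}[b]}$ goes through using Lemma~\ref{lem.equiv.perm} and the conjugation identity $(\pi(b)\ \pi(a))\circ\pi=\pi\circ(b\ a)$; the remaining rules are plain congruences. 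Instantiating with $u\equiv\sigma(X)$ and $v\equiv\sigma'(X)$ then yields $\Delta\cent\pi\act\sigma(X)\aleq\pi\act\sigma'(X)$, i.e. $\Delta\cent t\sigma\aleq t\sigma'$, completing the induction. Alternatively, this equivariance-with-fixed-context property can simply be quoted from~\cite{gabbay:nomr-jv}, after which nothing remains to do.
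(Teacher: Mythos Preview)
Your argument is correct. The paper states this lemma without proof, treating it as routine, so there is no detailed proof to compare against; your structural induction on $t$ is exactly the natural argument one would expect. The only nontrivial step is, as you identify, the variable case, where one needs that $\aleq$ is stable under applying a permutation to both sides with the context held fixed. Your sketch of the simultaneous induction establishing this (together with the freshness analogue) is correct; note that the freshness half appears verbatim in the paper as Lemma~\ref{lemm.fresh.equivar}, and the $\aleq$ half is standard in the nominal literature and could equally well be cited from~\cite{gabbay:nomr-jv} as you suggest.
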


\begin{prop}[Strengthening for closed rewriting] 
\label{prop.add.or.remove}
Fix a context $\Delta$ and terms $s$ and $t$.
Suppose $\Gamma$ is fresh (so if $a\#X\in\Gamma$ then $a \not\in \f{atms}(s,t,\Delta)$).
Suppose $R=(\nabla\cent l\rewriteswith{} r)$ is a rewrite rule.
Then $\Delta, \Gamma \cent s \rewriteswithc{R} t$  if and only if $\Delta \cent s \rewriteswithc{R} t$. 
\end{prop}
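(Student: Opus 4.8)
The plan is to prove the two implications separately. The right-to-left implication is a weakening argument; the left-to-right implication is the substantive one, and is where the preliminary lemmas earn their keep.

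\emph{If $\Delta\cent s\rewriteswithc{R}t$ then $\Delta,\Gamma\cent s\rewriteswithc{R}t$.} Take a freshened variant $\nw R$ of $R$ witnessing the step, re-choosing it if necessary so that it is also fresh for $\Gamma$ (a freshened variant may always be taken fresh for any finitely many extra atoms and unknowns). So there are a position $C$ with $s\equiv C[s']$ and a substitution $\theta$ with $\Delta,\f{atms}(\nw R)\#\f{unkn}(\Delta,s,t)\cent(\nw\nabla\theta,\ s'\aleq\nw l\theta,\ C[\nw r\theta]\aleq t)$. Since $\f{unkn}(\Delta,s,t)\subseteq\f{unkn}(\Delta,\Gamma,s,t)$, the context $\Delta,\Gamma,\f{atms}(\nw R)\#\f{unkn}(\Delta,\Gamma,s,t)$ contains the context just displayed, so Weakening (Lemma~\ref{lemm.weakening}) yields the same three judgements in it. That is exactly a witness for $\Delta,\Gamma\cent s\rewriteswithc{R}t$, with the same $\nw R$, $C$ and $\theta$.

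\emph{If $\Delta,\Gamma\cent s\rewriteswithc{R}t$ then $\Delta\cent s\rewriteswithc{R}t$.} Let the step be witnessed by $\nw R$ (fresh for $R$, $\Delta$, $\Gamma$, $s$, $t$), a position $C$ with $s\equiv C[s']$, and a substitution $\theta$, so that $\Delta^+\cent(\nw\nabla\theta,\ s'\aleq\nw l\theta,\ C[\nw r\theta]\aleq t)$ with $\Delta^+=\Delta,\Gamma,\f{atms}(\nw R)\#\f{unkn}(\Delta,\Gamma,s,t)$. Write $A=\{a\mid a\#X\in\Gamma\text{ for some }X\}$. By the freshness hypothesis on $\Gamma$ we have $A\cap\f{atms}(\Delta,s,t)=\varnothing$, and since $\nw R$ is fresh for $\Gamma$ also $A\cap\f{atms}(\nw R)=\varnothing$; together with $\f{atms}(s')\cup\f{atms}(C)\subseteq\f{atms}(s)$ this shows that no atom of $A$ occurs in any of $s'$, $C$, $t$, $\nw\nabla$, $\nw l$, $\nw r$. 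The key move is to replace $\theta$ by a substitution $\theta'$ whose image avoids $A$ and which is pointwise $\alpha$-equivalent to $\theta$ in $\Delta^+$: this is exactly what Lemma~\ref{lemm.nothing.silly} provides from $\Delta^+\cent s'\aleq\nw l\theta$ and $\Delta^+\cent C[\nw r\theta]\aleq t$, its proof constructing an equivalent substitution whose image uses only atoms already occurring in $s'$, $\nw l$ (resp.\ $t$, $\nw r$, $C$), so that a single application clears all of $A$ at once from the unknowns of $\nw l$ and of $\nw r$ (the unknowns of $\nw\nabla$ are treated the same way, or are already covered if the rule satisfies $\f{unkn}(\nabla)\subseteq\f{unkn}(l)$). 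Now Lemma~\ref{lemm.aleq.sigma} gives $\Delta^+\cent\nw l\theta\aleq\nw l\theta'$, $\Delta^+\cent\nw r\theta\aleq\nw r\theta'$ and $\Delta^+\cent w\theta\aleq w\theta'$ for each $w$ with $a\#w$ in $\nw\nabla$; using that $\aleq$ is transitive and symmetric, that $\alpha$-equivalence is a congruence (as in the proof of Proposition~\ref{prop:RimpliesT}), and that freshness is preserved under $\aleq$, we obtain $\Delta^+\cent(\nw\nabla\theta',\ s'\aleq\nw l\theta',\ C[\nw r\theta']\aleq t)$.

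Finally, every term appearing in these three judgements is free of every atom of $A$, so Strengthening (Lemma~\ref{lemm.strengthening}), applied once for each constraint $a\#X\in\Gamma$, removes $\Gamma$ from $\Delta^+$; because $\f{atms}(\nw R)$ is disjoint from $A$, no constraint of $\f{atms}(\nw R)\#\f{unkn}(\Delta,\Gamma,s,t)$ is removed, and because $\f{unkn}(\Gamma)\subseteq\f{unkn}(\Delta,s,t)$ we have $\f{unkn}(\Delta,\Gamma,s,t)=\f{unkn}(\Delta,s,t)$, so what remains is precisely $\Delta,\f{atms}(\nw R)\#\f{unkn}(\Delta,s,t)$. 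Together with $s\equiv C[s']$ this is a witness for $\Delta\cent s\rewriteswithc{R}t$ (with $\nw R$, $C$, $\theta'$). The main obstacle is this second implication, specifically the passage from $\theta$ to $\theta'$: one must be sure a single use of Lemma~\ref{lemm.nothing.silly} removes all of the finitely many atoms of $\Gamma$ from the image of the matching substitution (it does, since the equivalent substitution reuses only atoms already present in the matched terms), and then that the constraints of $\Delta^+$ whose atoms lie in $A$ are exactly those of $\Gamma$, so that Strengthening deletes $\Gamma$ and nothing more.
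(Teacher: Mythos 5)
Your proof follows the paper's argument essentially verbatim: the converse direction is Weakening (Lemma~\ref{lemm.weakening}), and the substantive direction unpacks the closed rewrite and uses Lemma~\ref{lemm.nothing.silly} together with Lemma~\ref{lemm.aleq.sigma} to replace the matching substitution by an $\alpha$-equivalent one avoiding the atoms of $\Gamma$, then removes $\Gamma$ by Strengthening (Lemma~\ref{lemm.strengthening}), exactly as in the paper. The only caveat --- shared with the paper's own write-up --- is the tacit handling of freshness constraints on unknowns of $\Gamma$ not occurring in $\Delta,s,t$ (your assertion $\f{unkn}(\Gamma)\subseteq\f{unkn}(\Delta,s,t)$ is not among the hypotheses), but such constraints on unknowns absent from the judgements are never consulted in a derivation, so this is harmless.
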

\begin{proof}
Suppose $\Delta, \Gamma \cent s \rewriteswithc{R} t$.
Unpacking definitions, there is some freshened $\nw{R}$ (with respect to $s$, $t$, $\Delta,\Gamma$, and $R$), and some position $C$ and substitution $\sigma$ such that $\f{dom}(\sigma)\subseteq \f{unkn}(\nw{R})$ and
\begin{align*}
s&\equiv C[s']
&\Delta,\Gamma,\f{atms}(\nw{R})\mathrel{\#} \f{unkn}(\Delta,s,t)&\cent (\nw{\nabla}\sigma,\ s'\aleq \nw{l}\sigma,\ C[\nw{r}\sigma]\aleq t).
\intertext{Using Lemmas~\ref{lemm.nothing.silly} and~\ref{lemm.aleq.sigma} we may assume without loss of generality that $a\not\in\f{atms}(\sigma)$.
By elementary calculations on the atoms of terms and using Strengthening (Lemma~\ref{lemm.strengthening})
we deduce
}
s&\equiv C[s']
&
\Delta,\f{atms}(\nw{R})\mathrel{\#} \f{unkn}(\Delta,s,t)&\cent (\nw{\nabla}\sigma,\quad s'\aleq \nw{l}\sigma,\quad C[\nw{r}\sigma]\aleq t).
\end{align*}
That is, $\Delta\cent s\rewriteswithc{R} t$ as required.

Conversely, suppose $\Delta\cent s \rewriteswithc{R} u$.
We unpack definitions as before and use the Weakening Lemma \ref{lemm.weakening}.
\end{proof}

We now establish the relationship between nominal rewriting and closed
rewriting. The first result, Proposition~\ref{prop.r.implies.cr}
below, shows that when a rule is closed, nominal rewriting implies
closed rewriting (this result was first shown as part of
\cite[Theorem~70]{gabbay:nomr-jv}; we give a shorter proof here). The
second result, Proposition~\ref{prop.cr.implies.r} below relating a
closed rewriting step with a nominal rewrite step, is new and is the
key to obtain a completeness proof for closed rewriting with respect
to nominal algebra.

\begin{lemm}
\label{lemm.fresh.equivar}
$\Delta\cent a\#s$ if and only if $\Delta\cent \pi(a)\#\pi\act s$.
\end{lemm}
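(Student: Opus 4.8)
The plan is to prove both implications at once by a routine structural induction, using the symmetry of the statement to halve the work. First I would observe that it suffices to establish the left-to-right direction, namely that $\Delta\cent a\#s$ implies $\Delta\cent\pi(a)\#\pi\act s$ for all $\pi$, $a$, $s$. Indeed, instantiating this with $\pi^\mone$, $\pi(a)$ and $\pi\act s$ in place of $\pi$, $a$ and $s$, and then using $\pi^\mone(\pi(a))=a$ together with $\pi^\mone\act(\pi\act s)\equiv s$ (Lemma~\ref{lem.equiv.perm}), recovers the converse.

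For the forward direction I would induct on the derivation of $\Delta\cent a\#s$ (equivalently, on the structure of $s$), inspecting each rule of Figure~\ref{fig.rules.equality}. In rule $\rulefont{\#ab}$, $s$ is an atom $b\not\equiv a$; since $\pi$ is a bijection $\pi(b)\not\equiv\pi(a)$, and $\pi\act b\equiv\pi(b)$, so $\rulefont{\#ab}$ again applies. In rule $\rulefont{\#[a]}$, $s\equiv[a]t$, and since $\pi\act[a]t\equiv[\pi(a)](\pi\act t)$ we reapply $\rulefont{\#[a]}$. In the inductive cases $\rulefont{\#[b]}$ (with $b\not\equiv a$, hence $\pi(b)\not\equiv\pi(a)$) and $\rulefont{\#\tf f}$, one applies the induction hypothesis to each premise and then uses $\pi\act[b]t\equiv[\pi(b)](\pi\act t)$, respectively $\pi\act\tf f(t_1,\ldots,t_n)\equiv\tf f(\pi\act t_1,\ldots,\pi\act t_n)$, to reassemble the conclusion.

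The only case that requires a computation is $\rulefont{\#X}$: here $s\equiv\pi'\act X$ and the rule fired because $(\pi'^\mone(a)\#X)\in\Delta$. Since $\pi\act(\pi'\act X)\equiv(\pi\circ\pi')\act X$ and $(\pi\circ\pi')^\mone(\pi(a))=\pi'^\mone(\pi^\mone(\pi(a)))=\pi'^\mone(a)$, the side-condition of $\rulefont{\#X}$ needed to derive $\Delta\cent\pi(a)\#(\pi\circ\pi')\act X$ is exactly the assumption $(\pi'^\mone(a)\#X)\in\Delta$ we already have. This is the one subtle point, and it is precisely where it matters that $\Delta$ is \emph{not} permuted in the statement: the permutation applied to the abstract atom cancels against the permutation already recorded inside the moderated unknown $\pi'\act X$, leaving the primitive freshness constraint in $\Delta$ untouched. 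So there is no genuine obstacle; the proof is mechanical once this cancellation is noted.
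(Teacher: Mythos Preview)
Your proof is correct. The paper states this lemma without proof, treating it as a routine fact; your argument by induction on the freshness derivation, with the symmetry reduction via $\pi^\mone$ and the key cancellation in the $\rulefont{\#X}$ case, is exactly the standard proof one would expect and fills the gap cleanly.
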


\begin{prop}
\label{prop.r.implies.cr}
If $R=(\nabla \cent  l{\rewriteswith{}} r)$ is closed then $\Delta\cent s\rewriteswith{R} t$ implies $\Delta\cent s\rewriteswithc{R} t$.
\end{prop}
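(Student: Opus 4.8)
The plan is to turn an ordinary one-step rewrite with $R$ into a closed one-step rewrite with a freshened variant of $R$, by composing the matching substitution with the substitution witnessing closedness and absorbing the permutation into the result. So suppose $\Delta\cent s\rewriteswith{R} t$, witnessed (via \eqref{eq.nr}) by a position $C$, a permutation $\pi$, and a substitution $\theta$ which we may take with $\f{dom}(\theta)\subseteq\f{unkn}(R)$, so that $s\equiv C[s']$, $\Delta\cent\nabla\theta$, $\Delta\cent s'\aleq\pi\act(l\theta)$ and $\Delta\cent C[\pi\act(r\theta)]\aleq t$. First I would pick a freshened variant $\nw{R}=(\nw{\nabla}\cent\nw{l}\to\nw{r})$ of $R$ that is fresh not only for $R,\Delta,s,t$ (as Definition~\ref{defn.closed.rewrite} allows) but also for $\pi$ and $\theta$; write $\Delta^\star=\Delta,\ \f{atms}(\nw{R})\mathrel{\#}\f{unkn}(\Delta,s,t)$ and $\nabla'=\nabla,\ \f{atms}(\nw{R})\mathrel{\#}\f{unkn}(R)$. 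Since $R$ is closed, Lemma~\ref{lemm.unpack.closed.matching} applied to the pair $(l,r)$ yields a substitution $\sigma$ with $\f{dom}(\sigma)\subseteq\f{unkn}(\nw{R})$ and $\nabla'\cent(\nw{\nabla}\sigma,\ l\aleq\nw{l}\sigma,\ r\aleq\nw{r}\sigma)$. The closed rewrite step I would exhibit uses the same position $C$ and the substitution $\theta'$ defined by $\theta'(X):=\pi\act((X\sigma)\theta)$ for $X\in\f{dom}(\sigma)$, so that $\f{dom}(\theta')\subseteq\f{unkn}(\nw{R})$.

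The verifications then run as follows. Because $\nw{R}$ is fresh for $\pi$ we have $\f{atms}(\nw{l})\cap\f{nontriv}(\pi)=\varnothing$, so Lemmas~\ref{lemm.commute.disjoint} and~\ref{lemm.compose.subs} give $\nw{l}\theta'\equiv\pi\act(\nw{l}(\sigma\circ\theta))\equiv\pi\act((\nw{l}\sigma)\theta)$, and likewise $\nw{r}\theta'\equiv\pi\act((\nw{r}\sigma)\theta)$. Granting for a moment the key obligation $\Delta^\star\cent\nabla'\theta$, Weakening (Lemma~\ref{lemm.weakening}) turns $\nabla'\cent l\aleq\nw{l}\sigma$ into $\Delta^\star\cent l\theta\aleq(\nw{l}\sigma)\theta$; applying $\pi$ (equivariance of $\aleq$, a routine induction in the spirit of Lemma~\ref{lemm.fresh.equivar}) gives $\Delta^\star\cent\pi\act(l\theta)\aleq\nw{l}\theta'$, and combining with $\Delta^\star\cent s'\aleq\pi\act(l\theta)$ (the weakened hypothesis) and transitivity/symmetry of $\aleq$ yields $\Delta^\star\cent s'\aleq\nw{l}\theta'$. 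Symmetrically $\Delta^\star\cent\pi\act(r\theta)\aleq\nw{r}\theta'$, whence $\Delta^\star\cent C[\pi\act(r\theta)]\aleq C[\nw{r}\theta']$ (congruence of $\aleq$ under positions, as already used in the proof of Proposition~\ref{prop:RimpliesT}) and therefore $\Delta^\star\cent C[\nw{r}\theta']\aleq t$. Finally, for each $\nw{a}\#\nw{X}\in\nw{\nabla}$ we have $\nw{X}\theta'=\pi\act((\nw{X}\sigma)\theta)$ (note $\nw{X}\in\f{dom}(\sigma)$ is forced by closedness, since $\nabla'$ mentions no freshened unknowns), and as $\pi$ fixes $\nw{a}$, Lemma~\ref{lemm.fresh.equivar} reduces $\Delta^\star\cent\nw{a}\#\nw{X}\theta'$ to $\Delta^\star\cent\nw{a}\#(\nw{X}\sigma)\theta$, which follows by Weakening from $\nabla'\cent\nw{a}\#\nw{X}\sigma$ (part of $\nabla'\cent\nw{\nabla}\sigma$). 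Together with $s\equiv C[s']$ this is precisely an instance of \eqref{eq.cnr}, so $\Delta\cent s\rewriteswithc{R} t$.

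The one delicate point, which I expect to be the main obstacle, is discharging the standing obligation $\Delta^\star\cent\nabla'\theta$: one must reconcile the closed-rewriting context $\Delta^\star$ with the rule-hypothesis context $\nabla$ across the substitution $\theta$. The part $\nabla\theta$ is immediate from $\Delta\cent\nabla\theta$ by Weakening. For the part $(\f{atms}(\nw{R})\mathrel{\#}\f{unkn}(R))\theta$ one needs $\Delta^\star\cent\nw{a}\#(X\theta)$ for every $\nw{a}\in\f{atms}(\nw{R})$ and $X\in\f{unkn}(R)$; here $\nw{a}\not\in\f{atms}(X\theta)$ since $\nw{R}$ is fresh for $R$ and $\theta$, and a routine induction on terms (the kind of argument also behind Lemma~\ref{lemm.strengthening}) shows that then $\Delta^\star\cent\nw{a}\#(X\theta)$ provided $\f{unkn}(X\theta)\subseteq\f{unkn}(\Delta,s,t)$. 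This last inclusion holds for $X\in\f{unkn}(l)$ because $\f{unkn}(X\theta)\subseteq\f{unkn}(l\theta)=\f{unkn}(\pi\act(l\theta))=\f{unkn}(s')\subseteq\f{unkn}(s)$, using that permutation and $\aleq$ preserve the set of unknowns; symmetrically for $X\in\f{unkn}(r)$ via $\Delta\cent C[\pi\act(r\theta)]\aleq t$; and any unknown occurring in $\nabla$ but not in $l$ or $r$ plays no role in the rewrite step, so before choosing $\nw{R}$ we may harmlessly normalise $\theta$ on it to an unknown-free value. Once Lemma~\ref{lemm.commute.disjoint} is in hand the manipulations involving $\pi$ and $\sigma$ are essentially mechanical, so the resulting argument is noticeably shorter than deriving the statement from \cite[Theorem~70]{gabbay:nomr-jv}.
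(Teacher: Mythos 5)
Your proof is correct and takes essentially the same route as the paper's: both invoke the closedness witness $\sigma$ from Lemma~\ref{lemm.unpack.closed.matching}, combine it with the matching data into the single substitution $\theta'=(\sigma\circ\theta)\circ\pi$ (your pointwise $\theta'(X)=\pi\act((X\sigma)\theta)$), use $\f{atms}(\nw{R})\cap\f{nontriv}(\pi)=\varnothing$ together with Lemmas~\ref{lemm.commute.disjoint}, \ref{lemm.compose.subs} and~\ref{lemm.fresh.equivar} to absorb $\pi$, and discharge the freshness obligation $\f{atms}(\nw{R})\#\f{unkn}(\Delta,s,t)\theta$ via the same normalisation of $\theta$. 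The only difference is one of detail: you spell out the steps the paper compresses into ``without loss of generality'' and ``it is not hard to verify'' (the $\f{unkn}(X\theta)\subseteq\f{unkn}(\Delta,s,t)$ argument and the routine induction giving $\Delta^\star\cent\nw{a}\#X\theta$), which is a welcome elaboration rather than a deviation.
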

\begin{proof}
Suppose $\Delta\cent s\rewriteswith{R} t$.
So there exist $\Delta$, $C$, $s'$, $\pi$, and $\theta$ such that   
$$
s\equiv C[s']
\quad\text{and}\quad
\Delta\cent\bigl(\nabla\theta,\quad s' \aleq \pi\act (l\theta),\quad C[\pi\act (r\theta)]\aleq t\bigr)
.
$$
Without loss of generality we assume $\f{unkn}(\theta(X))\subseteq\f{unkn}(\Delta,s,t)$ for every $X\in\f{dom}(\theta)$ (because we only `use' the part of $\theta$ that maps $l$ to $s$ and $r$ to $t$).
 
$\nabla\cent l\rewriteswith{} r$ is closed so by Lemma~\ref{lemm.unpack.closed.matching} there is a freshened variant $\nw{R}=(\nw{\nabla}\cent \nw{l}\rewriteswith{} \nw{r})$ of $R$ and a substitution $\sigma$ such that $\f{dom}(\sigma)\subseteq \f{unkn}(\nw{R})$ and
\begin{align*}
\nabla,\f{atms}(\nw{l})\mathrel{\#}\f{unkn}(\Delta,s,t) &\cent (\nw{\nabla}\sigma,\quad l\aleq \nw{l}\sigma,\quad r\aleq \nw{r}\sigma) . 
\intertext{It is not hard to use our assumptions to verify that}
\Delta,\f{atms}(\nw{l})\mathrel{\#}\f{unkn}(\Delta,s,t) &\cent \f{atms}(\nw{l})\mathrel{\#}\f{unkn}(\Delta,s,t)\theta  .
\end{align*}
It follows using Lemmas~\ref{lemm.weakening} and~\ref{lem.comm}
that
\begin{align*}
s&\equiv C[s']
&\Delta,\f{atms}(\nw{l})\mathrel{\#}\f{unkn}(\Delta,s,t)&\cent\bigl(\nw{\nabla}\sigma\theta,\ s' {\aleq} \pi{\act} (\nw{l}\sigma\theta),\  C[\pi\act (\nw{r}\sigma\theta)]{\aleq} t\bigr)
.
\intertext{
By assumption the atoms in $\nw{R}$ are fresh and so we can assume 
$\f{atms}(\nw{R})\cap\f{nontriv}(\pi)=\varnothing$.
It follows by Lemmas~\ref{lemm.commute.disjoint} and~\ref{lemm.compose.subs} that
$\pi\act(\nw{l}\sigma\theta)\equiv \nw{l}((\sigma\circ\theta)\circ\pi)$
and 
$\pi\act(\nw{r}\sigma\theta)\equiv \nw{r}((\sigma\circ\theta)\circ\pi)$.
Using Lemma~\ref{lemm.fresh.equivar} $\Delta,\f{atms}(\nw{l})\#\f{unkn}(\nabla,l)\cent \nw{\nabla}((\sigma\circ\theta)\circ\pi)$ also follows. 
Write $\theta'$ for $(\sigma\circ\theta)\circ\pi$.
Then
}
s&\equiv C[s']
&\Delta,\f{atms}(\nw{l})\mathrel{\#}\f{unkn}(\Delta,s,t)&\cent\bigl(\nw{\nabla}\theta',\quad s' \aleq \nw{l}\theta',\quad C[\nw{r}\theta']\aleq t\bigr)
.
\end{align*}
That is, $\Delta\cent s\rewriteswithc{R} t$ as required.
\end{proof}

\begin{lemm}
\label{lemm.sigmainverse}
Suppose $\nabla\cent l$ is a closed term-in-context where $\f{atms}(\nabla\cent l)=\{a_1,\ldots,a_n\}$ and $\f{unkn}(\nabla\cent l)=\{X_1,\ldots,X_n\}$;  we take these atoms and unknowns in some fixed but arbitrary order.

Suppose is $\nw{\nabla}\cent \nw{l}$ a freshened variant of $\nabla\cent l$ where  $\f{atms}(\nw{\nabla}\cent \nw{l})=\{\nw{a}_1,\ldots,\nw{a}_n\}$ and $\f{unkn}(\nw{\nabla}\cent \nw{l})=\{\nw{X}_1,\ldots,\nw{X}_n\}$; we take these fresh atoms and unknowns in a corresponding order.  

Let $\tau$ and $\varsigma$ be the permutation and substitution defined by
$$
\tau = (\nw{a}_1\ a_1)\circ\ldots\circ(\nw{a}_n\ a_n) \quad\text{and}\quad
\varsigma = [X_1\ssm \tau\act \nw{X}_1,\ldots,X_n\ssm \tau\act \nw{X}_n],
$$  
then: 
\begin{enumerate*}
\item
$\nw{l}\equiv \tau\act(l\varsigma)$.
\item
$\Gamma'\cent \nw{\nabla}\theta$ if and only if $\Gamma'\cent \nabla\varsigma\theta$.
\end{enumerate*}
\end{lemm}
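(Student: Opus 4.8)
The plan is to prove both items by unwinding the definitions, the main structural fact being that $\tau=(\nw a_1\ a_1)\circ\cdots\circ(\nw a_n\ a_n)$ is a self-inverse permutation: since the $a_i$ are distinct, the $\nw a_i$ are distinct, and $\{a_1,\ldots,a_n\}\cap\{\nw a_1,\ldots,\nw a_n\}=\varnothing$ (the $\nw a_i$ are chosen fresh), the transpositions $(\nw a_i\ a_i)$ have pairwise disjoint support, hence commute, and $\tau$ acts by $\tau(a_i)=\nw a_i$, $\tau(\nw a_i)=a_i$, fixing all other atoms; in particular $\tau^\mone=\tau$. I would also record at the outset that, by construction of the freshened variant (taking names "in the corresponding order"), $\nw l$ is exactly $l$ with every occurrence of $a_i$ --- including those inside permutations --- replaced by $\nw a_i$, and every $X_i$ replaced by $\nw X_i$; similarly for $\nw\nabla$.

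\textbf{Item (1).} I would argue by induction on the structure of $l$, using $\f{atms}(\nabla\cent l)=\{a_1,\ldots,a_n\}$ so that every atom of $l$ is some $a_i$. The atom case is $\tau\act(a_i\varsigma)\equiv\tau(a_i)=\nw a_i$; the abstraction case $[a_i]t$ and the term-former case follow from the definitions of permutation and substitution action together with the inductive hypothesis (e.g. $\tau\act([a_i](t\varsigma))\equiv[\nw a_i](\tau\act(t\varsigma))\equiv[\nw a_i]\nw t$). The only case with real content is $l\equiv\pi\act X_i$: here $\tau\act(l\varsigma)\equiv\tau\act(\pi\act(\tau\act\nw X_i))\equiv(\tau\circ\pi\circ\tau)\act\nw X_i$ by Lemma~\ref{lem.equiv.perm}, and since $\tau^\mone=\tau$ the permutation $\tau\circ\pi\circ\tau=\tau\circ\pi\circ\tau^\mone$ is the conjugate of $\pi$ by $\tau$, i.e. $\pi$ with its atoms renamed along $a_i\mapsto\nw a_i$ --- which is precisely the permutation appearing in that position of $\nw l$. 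Checking this conjugation identity, and that $\f{nontriv}(\pi)\subseteq\{a_1,\ldots,a_n\}$ (true since $\f{atms}(\pi\act X_i)=\f{nontriv}(\pi)$) so that $\tau$ renames all of $\pi$'s atoms correctly, is the one place I would be careful.

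\textbf{Item (2).} Write $\nabla$ as a set of primitive constraints of the form $a_i\#X_j$. By Definition~\ref{defn.Stheta}, $\nw\nabla\theta=\{\nw a_i\#(\nw X_j\theta)\mid a_i\#X_j\in\nabla\}$, while $\nabla\varsigma\theta=\{a_i\#((\tau\act\nw X_j)\theta)\mid a_i\#X_j\in\nabla\}$, and by Lemma~\ref{lem.comm} we have $(\tau\act\nw X_j)\theta\equiv\tau\act(\nw X_j\theta)$. Now Lemma~\ref{lemm.fresh.equivar} gives, for each constraint, $\Gamma'\cent\nw a_i\#(\nw X_j\theta)$ iff $\Gamma'\cent\tau(\nw a_i)\#\tau\act(\nw X_j\theta)$, and $\tau(\nw a_i)=a_i$; so the members of $\nw\nabla\theta$ and of $\nabla\varsigma\theta$ are derivable from $\Gamma'$ under exactly the same conditions. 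Since a freshness judgement against a set of constraints holds precisely when it holds for each member, this yields $\Gamma'\cent\nw\nabla\theta$ iff $\Gamma'\cent\nabla\varsigma\theta$.

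\textbf{Obstacle.} I do not anticipate a serious obstacle: the lemma is bookkeeping verifying that the "swap-back" permutation $\tau$ and the "substitute-back" substitution $\varsigma$ jointly undo the freshening. The only delicate point is the conjugation computation in the $\pi\act X_i$ case of item~(1), together with keeping straight which pieces of syntax live in which namespace. (Note that closedness of $\nabla\cent l$ is never used; it is hypothesised only because the lemma is invoked for closed rules.)
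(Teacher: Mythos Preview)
Your proposal is correct and follows essentially the same approach as the paper: induction on $l$ for part~(1) with the $\pi\act X$ case as the crux, and Lemmas~\ref{lem.comm} and~\ref{lemm.fresh.equivar} for part~(2). Your treatment of the $\pi\act X$ case is in fact more explicit than the paper's (which compresses the conjugation $\tau\circ\pi\circ\tau^{-1}$ into a single ``fact'' step with slightly abusive notation), and your closing observation that closedness is never actually used in the argument is accurate.
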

\begin{proof}
We prove the first part by induction on $l$.  
We sketch the case of $\pi\act X$:
$$
\tau\act ((\pi\act X)\varsigma) 
\stackrel{\text{Lemma~\ref{lem.equiv.perm}}}{\equiv} 
(\tau\circ\pi) \act\varsigma(X) 
\stackrel{\text{Lemma~\ref{lem.equiv.perm}}}{\equiv} 
(\tau\circ\pi\circ \varsigma) \act X 
\stackrel{\text{fact}}{\equiv} 
\pi'\act \nw{X}.
$$
For the second part consider some $\nw{a}\#\nw{X}\in\nw{\nabla}$ (originating from $a\#X\in\nabla$).
By definition $\varsigma(X)\equiv \tau\act \nw{X}$ and it follows that
$$
\nw{X}\theta
\stackrel{\text{Lemma~\ref{lem.equiv.perm}}}{\equiv}
(\tau\act (X\varsigma))\theta 
\stackrel{\text{Lemma~\ref{lem.comm}}}{\equiv}
\tau\act (X\varsigma\theta).
$$
By Lemma~\ref{lemm.fresh.equivar} $\Gamma'\cent \nw{a}\#(\nw{X}\theta)$ if and only if $\Gamma'\cent a\#(X\varsigma\theta)$.
The result follows.
\end{proof}

\begin{prop}
\label{prop.cr.implies.r}
If $R=(\nabla\cent l\rewriteswith{} r)$ is closed then $\Delta\cent s\rewriteswithc{R} t$ implies there is some fresh $\Gamma$ (so if $a\#X\in\Gamma$ then $a\not\in\f{atms}(\Delta,s,t)$) such that $\Delta,\Gamma\cent s\rewriteswith{R} t$.
\end{prop}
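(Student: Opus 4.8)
The plan is to start from the closed rewriting hypothesis and use Lemma~\ref{lemm.sigmainverse} to ``undo'' the freshening of the rule, turning the freshened variant $\nw{R}$ back into $R$ at the cost of a permutation $\pi$ (exactly the $\pi$ that \eqref{eq.cnr} lacks but \eqref{eq.nr} has) and some extra freshness assumptions $\Gamma$.

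First I would unpack $\Delta\cent s\rewriteswithc{R} t$ via Definition~\ref{defn.closed.rewrite}: there is a freshened variant $\nw{R}=(\nw{\nabla}\cent\nw{l}\rewriteswith{}\nw{r})$ of $R$ (fresh for $R,\Delta,s,t$), a position $C$, a term $s'$ and a substitution $\theta$ with $\f{dom}(\theta)\subseteq\f{unkn}(\nw{R})$ such that $s\equiv C[s']$ and
\[
\Delta,\ \f{atms}(\nw{R})\mathrel{\#}\f{unkn}(\Delta,s,t)\ \cent\ \bigl(\nw{\nabla}\theta,\ \ s'\aleq\nw{l}\theta,\ \ C[\nw{r}\theta]\aleq t\bigr).
\]
I take $\Gamma=\f{atms}(\nw{R})\mathrel{\#}\f{unkn}(\Delta,s,t)$; since $\nw{R}$ is fresh for $\Delta,s,t$ no atom of $\Gamma$ lies in $\f{atms}(\Delta,s,t)$, so $\Gamma$ is fresh in the required sense, and the freshness context above is literally $\Delta,\Gamma$.

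Next, because $R$ is closed, $\nabla\cent(l,r)$ is closed (using pairing as a term-former, Definition~\ref{defn.closed.rewrite}), so I apply Lemma~\ref{lemm.sigmainverse} to the closed term-in-context $\nabla\cent(l,r)$ together with the freshened variant $\nw{\nabla}\cent(\nw{l},\nw{r})$ coming from $\nw{R}$. This produces a permutation $\tau$ and a substitution $\varsigma$ with $\nw{l}\equiv\tau\act(l\varsigma)$ and $\nw{r}\equiv\tau\act(r\varsigma)$ (from part~1 applied to the pair), and with $\Gamma'\cent\nw{\nabla}\theta$ iff $\Gamma'\cent\nabla\varsigma\theta$ for all $\Gamma'$ (part~2). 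I set $\pi=\tau$ and $\theta'=\varsigma\circ\theta$. Then by Lemmas~\ref{lem.comm} and~\ref{lemm.compose.subs}, $\nw{l}\theta\equiv\tau\act\bigl((l\varsigma)\theta\bigr)\equiv\tau\act\bigl(l(\varsigma\circ\theta)\bigr)\equiv\pi\act(l\theta')$ and likewise $\nw{r}\theta\equiv\pi\act(r\theta')$, hence $C[\nw{r}\theta]\equiv C[\pi\act(r\theta')]$; and part~2 of Lemma~\ref{lemm.sigmainverse} instantiated at $\Gamma'=\Delta,\Gamma$ turns $\Delta,\Gamma\cent\nw{\nabla}\theta$ into $\Delta,\Gamma\cent\nabla\varsigma\theta=\nabla\theta'$. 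Substituting these syntactic identities into the displayed judgement yields precisely an instance of $\rulefont{Rew_{\nabla\cent l\rewriteswith{}r}}$ from \eqref{eq.nr} with context $\Delta,\Gamma$, position $C$, permutation $\pi$ and substitution $\theta'$, giving $\Delta,\Gamma\cent s\rewriteswith{R} t$.

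The routine parts are the domain bookkeeping for $\varsigma\circ\theta$ (restricting to the unknowns that actually occur) and the observation that $\tau$ may move atoms of $\Delta$ but this is harmless since \eqref{eq.nr} allows an arbitrary $\pi$. The only conceptual step is Lemma~\ref{lemm.sigmainverse} itself — the statement that a closed rule is a substitution instance, up to a ``namespace-swapping'' permutation $\tau$, of any freshened variant of itself; this is where the separate namespace of $\nw{R}$ is brought back onto $R$'s own atoms, and where the fresh $\Gamma$ necessarily appears, namely as the freshness assumptions that made the closed matching in \eqref{eq.cnr} succeed. Once that lemma is available there is no real obstacle left.
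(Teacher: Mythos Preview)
Your proposal is correct and follows essentially the same approach as the paper's own proof: unpack the closed-rewriting data, take $\Gamma=\f{atms}(\nw{R})\#\f{unkn}(\Delta,s,t)$, invoke Lemma~\ref{lemm.sigmainverse} to produce $\tau$ and $\varsigma$, and then use Lemmas~\ref{lem.comm} and~\ref{lemm.compose.subs} to rewrite $\nw{l}\theta$ and $\nw{r}\theta$ as $\tau\act(l\varsigma\theta)$ and $\tau\act(r\varsigma\theta)$ so as to obtain an instance of $\rulefont{Rew_{\nabla\cent l\rewriteswith{} r}}$. The paper is terser (it does not explicitly name $\Gamma$ or $\theta'$), but the structure and the key lemma are identical.
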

\begin{proof}
If $\Delta\centt{R} s\rewriteswithc{} t$ then for some freshened variant $\nw{R}=(\nw{\nabla}\cent \nw{l}\rewriteswith{} \nw{r})$ of $R$ (freshened with respect to $R$, $\Delta$, $s$, and $t$) there exists some position $C$, term $s'$, and substitution $\theta$ such that 
\begin{align*}
s&\equiv C[s'] &\Delta,\f{atms}(\nw{R})\mathrel{\#}\f{unkn}(\Delta,s,t)&\cent (\nw{\nabla}\theta,\quad s'\aleq \nw{l}\theta,\quad C[\nw{r}\theta]\aleq t) . 
\intertext{By Lemmas~\ref{lemm.sigmainverse} and~\ref{lem.comm}, there exists $\tau$ and $\varsigma$ such that:}
s&\equiv C[s'] & \Delta,\f{atms}(\nw{R})\mathrel{\#}\f{unkn}(\Delta,s,t)&\cent (\nabla\varsigma\theta,\ s'\aleq \tau\act (l\varsigma\theta),\ C[\tau\act (r\varsigma\theta)]\aleq t) . 
\end{align*}
Using Lemmas~\ref{lemm.compose.subs} and~\ref{lem.comm} we deduce
$
\Delta,\f{atms}(\nw{R})\mathrel{\#}\f{unkn}(\Delta,s,t)\cent s\rewriteswith{R} t.
$
\end{proof}
 
\begin{defn}
$\Delta\centt{R} s\tworewriteswithc{}t$ denotes the symmetric closure 
of $\Delta\centt{R} s\rewriteswithc{} t$.
\end{defn}

\begin{thrm}[Soundness and completeness]
\label{thm:sound-compl}
Suppose the rewrite theory $\theory R$ is a presentation (Definition~\ref{defn.presentation})
of the equational theory  $\theory T$.
Suppose all rules in $\theory R$ are closed.
Then $\Delta\centt{T} s=t$ if and only if $\Delta\centt{R} s\tworewriteswithc{}t$.
\end{thrm}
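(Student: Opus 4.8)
The plan is to prove the two implications separately, drawing on a different prior result for each. For the right-to-left direction ($\Delta\centt{R}s\tworewriteswithc{}t$ implies $\Delta\centt{T}s=t$) I would induct on the derivation of $\Delta\centt{R}s\tworewriteswithc{}t$. Reflexivity (an $\aleq$-link), symmetry, and transitivity are immediate, since $\centt{T}$-equality contains $\aleq$ and is reflexive, symmetric and transitive. The substantive case is a one-step link $\Delta\cent u\rewriteswithc{R}u'$ with a rule $R=(\nabla\cent l\rewriteswith{}r)\in\theory R$: because $R$ is closed, Proposition~\ref{prop.cr.implies.r} produces a fresh context $\Gamma$ (no atom of $\Gamma$ in $\Delta,u,u'$) with $\Delta,\Gamma\cent u\rewriteswith{R}u'$. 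Unpacking $\rulefont{Rew}$ and using that $\aleq$ is a congruence along the position, this data is exactly what is needed to form an instance of $\rulefont{Axi}$ for the axiom $\nabla\cent l=r$ --- which lies in $\theory T$ because $\theory R$ presents $\theory T$, and whose side-condition on its fresh context is satisfied by precisely this $\Gamma$. Hence $\Delta\centt{T}u=u'$ (and $\Delta\centt{T}u'=u$ by symmetry), and chaining along the derivation via transitivity of $=$ finishes this direction. Note that no separate strengthening principle for nominal algebra is needed: the locally fresh $\Gamma$ produced at each step is reused directly as the fresh context of the corresponding axiom instance.

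For the left-to-right direction I would start from Quasi-Completeness (Theorem~\ref{thrm.compl}): $\Delta\centt{T}s=t$ yields a fresh $\Gamma$ (no atom of $\Gamma$ in $\Delta,s,t$) with $\Delta,\Gamma\centt{R}s\tworewriteswith{}t$. Since every rule of $\theory R$ is closed, Proposition~\ref{prop.r.implies.cr} applies to each one-step nominal rewrite in this derivation, turning it into a closed rewrite (the $\aleq$-links are already closed-rewrite links), so that $\Delta,\Gamma\centt{R}s\tworewriteswithc{}t$. What remains is to remove the auxiliary context $\Gamma$ and conclude $\Delta\centt{R}s\tworewriteswithc{}t$.

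This elimination is the crux, and I expect it to be the main obstacle. Strengthening for closed rewriting (Proposition~\ref{prop.add.or.remove}) removes $\Gamma$ from a \emph{single} closed-rewrite step, but only when $\Gamma$ is fresh for \emph{both} endpoints; in a multi-step derivation the \emph{intermediate} terms may mention atoms of $\Gamma$, which can enter through the matching substitution used at a rewrite step. The remedy is to re-choose the derivation so that all of its terms avoid $\Gamma$: the atoms of $\Gamma$ occur in the problem only as extra freshness hypotheses, nowhere in $s$, $t$, or any rule of $\theory R$, so by the some/any property of the $\new$-quantifier and ZFA-equivariance --- the same renaming argument already used informally in the Remarks of Section~\ref{ssec.na.theories} --- together with Lemmas~\ref{lemm.nothing.silly} and~\ref{lemm.aleq.sigma} to push $\Gamma$-atoms out of the matching substitutions, one obtains an equivalent derivation of $\Delta,\Gamma\centt{R}s\tworewriteswithc{}t$ all of whose terms are $\Gamma$-free. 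Proposition~\ref{prop.add.or.remove} then applies at each step and strips $\Gamma$, giving $\Delta\centt{R}s\tworewriteswithc{}t$ in context $\Delta$ as required. (This direction can equally be run as a direct induction on the derivation of $\Delta\centt{T}s=t$, which confines the $\Gamma$-removal to each single $\rulefont{Axi}$-leaf, handled by the same renaming; but the difficulty is the same either way. The reason closedness is the right hypothesis here is that a non-closed rule such as the $\pi$-calculus $\rulefont{Open}$ rule can genuinely gensym atoms that cannot be renamed away.)
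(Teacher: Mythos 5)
Your proposal is correct and follows essentially the same route as the paper: Quasi-Completeness (Theorem~\ref{thrm.compl}) together with Proposition~\ref{prop.r.implies.cr} and Strengthening (Proposition~\ref{prop.add.or.remove}) for the left-to-right direction, and Proposition~\ref{prop.cr.implies.r} together with soundness (Proposition~\ref{prop:RimpliesT}, or equivalently a direct \rulefont{Axi} instance) for the converse. The extra care you take --- reusing the locally fresh $\Gamma$ as the fresh context of the \rulefont{Axi} instance, and renaming $\Gamma$-atoms out of intermediate terms before stripping $\Gamma$ step by step with Strengthening --- merely makes explicit what the paper's terse proof leaves implicit, rather than constituting a different argument.
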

\begin{proof}
Suppose $\Delta\centt{T} s=t$.
By Theorem~\ref{thrm.compl} there is a fresh $\Gamma$ such that $\Delta,\Gamma\centt{R} s\tworewriteswith{} t$. 
By Proposition~\ref{prop.r.implies.cr} and Strengthening (Proposition~\ref{prop.add.or.remove}) $\Delta\centt{R} s\tworewriteswithc{} t$.

Conversely, suppose $\Delta\centt{R} s\tworewriteswithc{} t$.
By Proposition~\ref{prop.cr.implies.r} $\Delta,\Gamma\centt{R} s\tworewriteswith{} t$ for some fresh $\Gamma$.
It follows by Proposition~\ref{prop:RimpliesT} that $\Delta\centt{T} s=t$.
\end{proof}

\subsection{Mechanising equational reasoning}
\label{subsec:mechanising}

Closed nominal rewriting can be used 
to automate reasoning in nominal equational theories,
provided that the theory satisfies certain conditions.

\begin{defn}
A rewrite theory $\theory R$ is \deffont{closed} when every $R\in\theory R$ is closed (Definition~\ref{defn.closed.rewrite}). 
We say that $t$ is an \deffont{($\theory R$-)normal form} of $s$ if $\Delta \centt{R} s \rewriteswithc{} t$ and there is no $u$ such that $\Delta \centt{R} t \rewriteswithc{} u$ (so there is no rewrite from $t$).

A theory $\theory R$ is \deffont{terminating} when there are no infinite closed rewriting sequences $\Delta \centt{R} t_1  \rewriteswithc{} t_2,\ t_2  \rewriteswithc{} t_3,\ \ldots $.
It is  \deffont{confluent} when, if $\Delta\centt{R} s  \rewriteswithc{}
 t$ and $\Delta\centt{R} s \rewriteswithc{} t'$, then $u$ exists such that 
$\Delta\centt{R} t \rewriteswithc{} u$ and $\Delta\centt{R} t' \rewriteswithc{} u$.

A theory $\theory R$ is \deffont{convergent} when it is terminating and confluent.
\end{defn}

\begin{thrm}
\label{thrm.eq.rwc}
Suppose the axioms in a theory $\theory T$ can be
oriented to form a closed $\theory R$. If
$\theory R$ is confluent, then $\Delta \centt{T} s = t$ if and only if
there exists $u$ such that $\Delta \centt{R} s  \rewriteswithc{} u$ and $\Delta
\centt{R} t  \rewriteswithc{} u$.
\end{thrm}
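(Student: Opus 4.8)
The plan is to reduce this statement to the Soundness and Completeness theorem (Theorem~\ref{thm:sound-compl}) together with a standard Church--Rosser argument for the abstract relation $\rewriteswithc{}$. First I would note that orienting the axioms of $\theory T$ yields a rewrite theory $\theory R$ that is a \emph{presentation} of $\theory T$ in the sense of Definition~\ref{defn.presentation}: for each axiom $\nabla\cent s=t\in\theory T$ exactly one of the rules $\nabla\cent s\rewriteswith{} t$, $\nabla\cent t\rewriteswith{} s$ is put into $\theory R$, and nothing else is. Orienting an axiom does not change the underlying term-in-context (pairing $s$ and $t$), so ``$\theory R$ is closed'' is exactly ``every axiom of $\theory T$, oriented, is a closed rule''. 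Hence Theorem~\ref{thm:sound-compl} applies and gives
\[
\Delta\centt{T} s=t \quad\liff\quad \Delta\centt{R} s\tworewriteswithc{}t .
\]
It remains to show that, for the \emph{fixed} context $\Delta$, convertibility coincides with joinability: $\Delta\centt{R} s\tworewriteswithc{}t$ holds iff some $u$ has $\Delta\centt{R} s\rewriteswithc{} u$ and $\Delta\centt{R} t\rewriteswithc{} u$. This is now a purely abstract-rewriting fact about the single relation $\rewriteswithc{}$ --- which is reflexive, transitive and contains $\aleq$ by the definition of the closed rewrite relation --- under the confluence hypothesis; all the nominal and freshness content has already been absorbed by Theorem~\ref{thm:sound-compl}.

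The ``if'' direction is immediate: from a common reduct $u$ we have $\Delta\centt{R} s\rewriteswithc{} u$ and $\Delta\centt{R} t\rewriteswithc{} u$, so $s\tworewriteswithc{}u$ and $u\tworewriteswithc{}t$, so $s\tworewriteswithc{}t$ because $\tworewriteswithc{}$ is an equivalence relation containing $\rewriteswithc{}$. For the ``only if'' direction I would induct on the number $k$ of elementary $\rewriteswithc{}$-steps (each taken in one direction or the other) in a conversion $w_0,\ldots,w_k$ with $w_0\equiv s$ and $w_k\equiv t$ witnessing $\Delta\centt{R} s\tworewriteswithc{}t$. If $k=0$ then $\Delta\cent s\aleq t$ and we take $u\equiv t$, using that $\rewriteswithc{}$ contains $\aleq$. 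If $k>0$, the induction hypothesis gives a common reduct $v$ of $w_0$ and $w_{k-1}$. When the last step is $w_{k-1}\rewriteswithc{} w_k$, the term $w_{k-1}$ reduces both to $v$ and to $w_k$, so confluence of $\theory R$ yields $w$ with $\Delta\centt{R} v\rewriteswithc{} w$ and $\Delta\centt{R} w_k\rewriteswithc{} w$; then $s\rewriteswithc{} v\rewriteswithc{} w$ and $t\equiv w_k\rewriteswithc{} w$, so $w$ is a common reduct of $s$ and $t$. When the last step is $w_k\rewriteswithc{} w_{k-1}$, we have $t\equiv w_k\rewriteswithc{} w_{k-1}\rewriteswithc{} v$ and $s\rewriteswithc{} v$, so $v$ is already a common reduct. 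This completes the induction.

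I do not expect a genuine obstacle. The only points needing care are: checking that ``oriented to form a closed $\theory R$'' really does make \emph{both} hypotheses of Theorem~\ref{thm:sound-compl} available (that $\theory R$ is a presentation of $\theory T$, and that each rule of $\theory R$ is closed); and using confluence in the form actually stated --- a diamond property of the reflexive--transitive relation $\rewriteswithc{}$ itself, rather than merely local confluence --- so that the zig-zag tiling goes through with no appeal to termination. Beyond that the argument is routine, and the fixed-$\Delta$ convention means no freshness side-conditions resurface at this stage.
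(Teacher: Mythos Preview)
Your proposal is correct and follows the same approach as the paper: the paper's entire proof is the single line ``By Theorem~\ref{thm:sound-compl}'', leaving the passage from convertibility to joinability via confluence as an implicit standard fact, which you have spelled out explicitly.
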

\begin{proof}
By Theorem~\ref{thm:sound-compl}. 
\end{proof}

Theorem~\ref{thrm.eq.rwc} does not require termination.  If we have
termination then we can decide whether there exists a term $u$ with
the desired property: it suffices to rewrite $s$ and $t$ to normal
form and then check that the normal forms are $\alpha$-equivalent
(convergence guarantees existence and unicity of normal forms up to
$\alpha$-equivalence; a
linear-time algorithm to check $\alpha$-equivalence is described in~\cite{FernandezM:wollic-jv}). Also, since Theorem~\ref{thrm.eq.rwc} uses closed rewriting, the computation of a rewrite step is efficient: nominal matching is sufficient (see also~\cite{FernandezM:wollic-jv} for linear-time nominal matching algorithms).

\begin{corr}[Decidability of deduction in {\theory T}]
\label{cor:dec}
Suppose $\theory T$ is an equational theory whose axioms can be
oriented to form a closed $\theory R$. 
Suppose $\theory R$ is convergent. Then equality is decidable in $\theory T$
(i.e., $\Delta \centt{T} s = t$ is a decidable relation). 
\end{corr}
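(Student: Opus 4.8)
The plan is to derive Corollary~\ref{cor:dec} directly from Theorem~\ref{thrm.eq.rwc} together with the observation, already noted in the discussion following that theorem, that closed rewriting steps are effectively computable because nominal matching is decidable (indeed linear-time, by \cite{FernandezM:wollic08,FernandezM:wollic-jv}).

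First I would unfold the hypotheses: $\theory R$ is obtained by orienting the axioms of $\theory T$, $\theory R$ is closed, and $\theory R$ is convergent, i.e.\ both terminating and confluent. Since $\theory R$ is confluent, Theorem~\ref{thrm.eq.rwc} applies and gives that $\Delta\centt{T}s=t$ holds if and only if there is some $u$ with $\Delta\centt{R}s\rewriteswithc{}u$ and $\Delta\centt{R}t\rewriteswithc{}u$. It therefore suffices to show that this latter condition is decidable.

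Next I would argue that, because $\theory R$ is terminating, we can compute normal forms: starting from $s$, repeatedly search for an applicable closed rewrite step --- which is decidable, since for each rule $R\in\theory R$ (there may be infinitely many, but one may assume $\theory R$ finite for a decision procedure, or that the set of rules applicable at a given position is finite), each position $C$ in $s$, and a freshened variant $\nw R$, checking whether \eqref{eq.cnr} has a solution is an instance of nominal matching, which is decidable. Termination guarantees this process halts, yielding a normal form $s_{\nf}$, and similarly $t_{\nf}$. By convergence (termination $+$ confluence) normal forms are unique up to $\aleq$, so a common reduct $u$ exists if and only if $s_{\nf}\aleq t_{\nf}$ in context $\Delta$, i.e.\ $\Delta\cent s_{\nf}\aleq t_{\nf}$. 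Finally, $\alpha$-equivalence of nominal terms in a freshness context is itself decidable (a linear-time algorithm is given in~\cite{FernandezM:wollic-jv}), so the whole test is effective, and hence $\Delta\centt{T}s=t$ is decidable.

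The only genuine subtlety --- and the step I would be most careful about --- is the claim that a common reduct exists iff the (unique) normal forms coincide up to $\aleq$. The ``if'' direction is immediate; for ``only if'', one needs that if $\Delta\centt{R}s\rewriteswithc{}u$ and $\Delta\centt{R}t\rewriteswithc{}u$ then $s$ and $t$ have the same normal form, which follows by reducing $u$ further to a normal form and invoking confluence (Newman-style) to identify it with the normal forms of $s$ and of $t$ up to $\aleq$. All of this is standard abstract-rewriting reasoning, lifted along the fact that the underlying equality on terms here is $\aleq$ rather than syntactic identity, which the definition of the closed rewrite relation in Definition~\ref{defn.closed.rewrite} already builds in. Since the corollary as stated only asserts decidability, it is enough to invoke Theorem~\ref{thrm.eq.rwc} and these effectivity observations; no new nominal machinery is required.
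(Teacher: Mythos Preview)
Your proposal is correct and takes essentially the same approach as the paper. The paper gives no separate proof for Corollary~\ref{cor:dec}; the argument is contained in the paragraph immediately preceding it, which --- just as you do --- invokes Theorem~\ref{thrm.eq.rwc}, computes normal forms using termination, appeals to convergence for uniqueness of normal forms up to $\aleq$, and checks $\alpha$-equivalence via the linear-time algorithm of \cite{FernandezM:wollic-jv}.
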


\section{Conclusions}
\label{sec:Conclusions}

Efficient algorithms for closed nominal rewriting and for checking
$\alpha$-equivalence are described in~\cite{FernandezM:wollic-jv}. We
can also check that rules are closed in linear time, with the nominal
matching algorithm of~\cite{FernandezM:wollic-jv}.  It follows from
Corollary~\ref{cor:dec} that, had we a procedure to check that a given
set of rules is convergent, we could directly build an automated
theorem prover for nominal theories. Unfortunately, termination and
confluence are undecidable properties even for first order
rules. Fortunately, closed nominal rewrite rules inherit many of the
good properties of first-order rewriting systems: orthogonality is a
sufficient condition for confluence (see~\cite{gabbay:nomr}) and it is
easy to check. If the theory under consideration is not
orthogonal, then the alternative is to check termination and to check
that all critical pairs are joinable (which is a sufficient condition
for convergence, see~\cite{gabbay:nomr}). Reduction orderings (to
check termination) and completion procedures (to ensure that all
critical pairs are joinable) are available for closed nominal
rules~\cite{FernandezM:nrpo}.

We can consider a recent `\emph{permissive}' variant of nominal terms \cite{gabbay:perntu,gabbay:unialt}.
These eliminate freshness contexts and give a tighter treatment of $\alpha$-equivalence, which might simplify the proofs here. 
Permissive nominal terms have been implemented in prototype form \cite{mulligan:imppnt}, but it remains to consider more efficient algorithms to manipulate them.

\newcommand{\etalchar}[1]{$^{#1}$}

\end{document}